\documentclass{article}

\usepackage{PRIMEarxiv}

\usepackage[utf8]{inputenc}
\usepackage[T1]{fontenc}
\usepackage{hyperref}
\usepackage{url}
\usepackage{booktabs}
\usepackage{amsfonts}
\usepackage{amsmath}
\usepackage{amssymb}
\usepackage{amsthm}
\usepackage{nicefrac}
\usepackage{microtype}
\usepackage{fancyhdr}
\usepackage{graphicx}
\usepackage{braket}
\usepackage{mathtools}
\usepackage[ruled,vlined,linesnumbered]{algorithm2e}
\usepackage{multirow}
\usepackage{makecell}
\usepackage{colortbl}
\usepackage{float}
\usepackage{tikz}
\usetikzlibrary{arrows.meta, positioning, decorations.pathreplacing}

\graphicspath{{media/}}

\newtheorem{theorem}{Theorem}
\newtheorem{lemma}{Lemma}
\newtheorem{proposition}{Proposition}

\theoremstyle{definition}

\theoremstyle{remark}

\DeclareMathOperator{\TV}{TV}

\title{A Quantum Algorithm for Random Number Generation}
\author{
  Aastha Kataria$^1$, Devansh Desai$^1$, Ashwini Dalvi$^1$, Sagar Korde$^1$,\\
  {\bf Abhijeet Pasi$^2$, Irfan N A Siddavatam$^{1,3}$, Sudhir Ranjan Jain$^1$} \\[4pt]
  $^1$Department of Information Technology \\
  $^2$Department of Computer Engineering \\
  K. J. Somaiya School of Engineering \\
  $^3$School of Design, Somaiya Vidyavihar University \\
  Vidyavihar, Mumbai 400077, India \\
  \texttt{\{aastha.k, devansh.desai, ashwini.dalvi, sagar.korde}
  \\
  \texttt{abhijeet.p, irfansiddavatam, sudhirranjan\}@somaiya.edu}
}

\begin{document}
\maketitle

\begin{abstract}
We present a quantum algorithm for random number generation that
achieves a provable quadratic speedup over classical Markov chain
mixing, building on the Diaconis--Shahshahani Fourier analysis of the
top-to-random card shuffle. The algorithm integrates three quantum
primitives into a unified mixing circuit: the Quantum Fourier Transform
(QFT), which diagonalizes the Markov transition operator; controlled
phase rotations, which encode the shuffle eigenvalue spectrum; and the
Grover diffusion operator, which acts as a quantum analogue of the
Aldous--Diaconis strong uniform stopping time by reflecting amplitudes
about their mean at each iteration.
For an $n$-qubit register, the mixing time is
$O(\sqrt{n \log n})$ iterations. Extending to $m$ qudits of local
dimension $d$ reduces this to $O(\sqrt{\log_d N})$ iterations, where
$N = d^m$, compared to the classical $O(n \log n)$ bound. The qudit
formulation further reduces QFT circuit depth from $O(\log^2 N)$ to
$O(\log_d^2 N)$ gates per layer by encoding the same $N$-state space
using $m = \log_d N$ subsystems instead of $\log_2 N$ qubits. We
validate both variants on IBM superconducting hardware.
\end{abstract}

\keywords{quantum algorithm \and random number generation \and Markov chain mixing
  \and quantum Fourier transform \and Grover diffusion \and qudits \and card shuffling}

\section{Introduction}

Random numbers play an essential role in a wide variety of fields such as
cryptography~\cite{menezes,ferguson}, simulations and modelling~\cite{law},
cybersecurity~\cite{deng}, gaming~\cite{RAND}, risk
analysis~\cite{rychlik}, and so on. Thus, reliable and efficient algorithms for the
generation of random numbers are of tremendous importance. It is well-known that randomized
algorithms~\cite{motwani} employ randomness in their computations to achieve an accelerated
convergence towards desired outcome. Of particular significance are methods and algorithms for
generating purely random number sequences~\cite{hull,bofinger}. Tests of randomness is an old,
fascinating subject, going back to Foster and Stuart~\cite{foster}. Testing the hypothesis of
randomness has led to a large body of work which we all rely upon. The systematic mathematical
study of pseudo-random sequences on computers begins in earnest with the 1958 analysis of Eve
and V.~J.~Bofinger, whose paper~\cite{bofinger} established rigorous number-theoretic conditions
for the period length of multiplicative congruential generators operating under a decimal modulus
$n = 10^s$. Leveraging Euler's totient function and the Chinese Remainder
Theorem~\cite{silverman}, they derived ten theorems governing maximal periodicity and corrected
prior errors by Moshman~\cite{moshman}, identifying $a = 3$ as an optimal multiplier. Four years
later, Hull and Dobell's survey~\cite{hull} consolidated and extended this body of work, providing
a comprehensive treatment of both multiplicative congruential ($c = 0$) and mixed congruential
($c \neq 0$) methods. Their survey identified a fundamental tension that would drive the field for
decades: mixed generators can achieve full period $m$ but exhibit inferior statistical behaviour,
whereas multiplicative generators yield superior empirical randomness at the cost of a shorter
period. It is from this tension that the subsequent evolution of the field proceeds.

The urgency of these questions is rooted in the Monte Carlo method~\cite{bauer}, a term first
used in 1946 by von Neumann and Ulam at Los Alamos in connection with neutron transport
calculations~\cite{eckhardt}. The characteristic error of the Monte Carlo method decays as
$O(n^{-1/2})$, making the volume of high-quality random numbers a direct determinant of
computational accuracy. Applications span the simulation of particle transport, the evaluation of
high-dimensional integrals, the solution of differential equations, and a broad range of problems
in the social and life sciences. In all these cases, systematic biases in the pseudo-random
sequence -- arising from the lattice structure of linear congruential generators or from
non-trivial serial correlations -- translate directly into systematic errors in the computed
result~\cite{ossola}. Hull and Dobell~\cite{hull}, surveying the field, noted that multiplicative
generators produced ``consistently good statistical properties'' under empirical testing, and that
the full bibliography of the subject was ``intended to be complete'' with respect to published
computer-generated random numbers.

The half-century following Hull and Dobell's survey witnessed successive generations of
improvement in classical pseudo-random number generation. The introduction of the Mersenne
Twister by Matsumoto and Nishimura~\cite{matsumoto} addressed the period-length problem with a
cycle of $2^{19937} - 1$ and 623-dimensional equidistribution, becoming the default generator in
Python, R, and MATLAB. Yet its large state (624 32-bit words) proved a liability in
memory-constrained or massively parallel settings, and its linear recurrence structure rendered it
vulnerable to the linearity tests of the TestU01 BigCrush battery. The 1980s and 1990s saw the
limitation of linear congruential generators that $k$-tuples of generated values lie on a finite
set of hyperplanes which become a critical bottleneck for high-dimensional Monte Carlo integration.
The modern response has been the family of ``scrambled linear'' generators: Melissa O'Neill's
Permuted Congruential Generators~\cite{melissa}, which apply a nonlinear output permutation to an
underlying linear recurrence, and the Xoshiro/Xoroshiro family of Blackman and
Vigna~\cite{blackman}, which use bitwise XOR, shift, and rotation operations to achieve
sub-nanosecond output speeds while passing the full TestU01~\cite{testu01} BigCrush battery. As
of 2025--2026, PCG-DXSM is the default generator in NumPy, Xoshiro256++ is the default in Julia
and .NET~6, and counter-based generators such as Philox (used by TensorFlow and NVIDIA's cuRAND)
have emerged as the preferred architecture for massively parallel high-performance computing,
where statelessness enables constant-time jumps to any point in the sequence and perfect
reproducibility across parallel architectures.

Alongside the algorithmic advances, the criteria for what constitutes a ``good'' pseudo-random
sequence have evolved dramatically. The frequency, serial, and gap tests employed by Bofinger in
1958 have been superseded by batteries that perform trillions of operations to detect the smallest
statistical bias. The TestU01 BigCrush suite of L'Ecuyer and Simard, the Dieharder battery of
Robert Brown, the PractRand multi-terabyte tests of Doty-Humphrey~\cite{practrand} and the NIST
SP 800-22~\cite{nist22} standard collectively define the modern validation landscape. A critical
insight identified in 2024--2025 research concerns ``seed dependence''~\cite{wu}: different
pseudo-random seeds can lead to significant variability in empirical risk estimates when machine
learning methods are applied to simulation data, underscoring the necessity of using high-quality
generators such as PCG or Philox in scientific computing. The emergence of AI-based generators
-- notably the EPRNG~\cite{eprng} (using a Deep Convolutional GAN for Internet of Vehicles
applications) and the LPRNG (using a Wasserstein-distance GAN~\cite{lprng} to learn the output
of the Mersenne Twister, producing sequences that fully satisfy the NIST SP 800-22 suite) --
represents the most recent frontier of the classical algorithmic tradition.
Yet all of the generators described above share a fundamental property: they are deterministic.
Given the seed, the entire sequence is fixed. This is a virtue for reproducibility but a liability
wherever absolute unpredictability is required, as in cryptographic key generation. True Random
Number Generators (TRNGs) address this by harvesting physical entropy -- from latch
metastability, photoresistor environmental fluctuations, or CPU timing jitter -- and processing
the raw bits through conditioners mandated by the NIST SP 800-90B~\cite{nist90} standard to
approach full entropy. The integration of TRNG hardware with algorithmic pseudo-random
generators, in so-called ``seed-conditioned'' hybrid architectures, has become standard practice
for security-sensitive deployments.

A definitive step in defining random sequence was taken by Martin-L\"{o}f~\cite{martinlof} when
he developed the algorithmic theory of randomness. This was in continuation to application of
computability to von Mises' definition of randomness by Church~\cite{church}. Martin-L\"{o}f
shows that there is a universal test such that if a sequence passes it, it passes all tests.
Schnorr~\cite{schnorr} proved that a sequence is random if and only if it is Martin-L\"{o}f
random.

More recently, Quantum Random Number Generators (QRNGs) that exploit the fundamental
indeterminacy of quantum mechanics -- such as the JUR02~\cite{jur02} prototype producing truly
random bit sequences at 1 MB/s, and commercial Micro-LED-based QRNGs designed for 6G and AI
data centres at 12.5 GB/s -- have begun the transition from laboratory demonstrations to
deployed hardware. These systems produce genuinely non-deterministic outputs, unlike all
pseudo-random approaches tracing back to Bofinger and Lehmer~\cite{bofinger,lehmer}.
The present paper occupies a distinct position in this landscape. Rather than replacing physical
hardware QRNGs or competing with classical PRNGs on raw throughput, we address the algorithmic
question of whether quantum computation can accelerate the mixing of a stochastic process to
uniformity -- that is, whether the number of operations required to generate a sequence that is
statistically indistinguishable from the uniform distribution can be reduced by exploiting quantum
parallelism.

We are motivated by the generation of randomness by shuffling of cards, which has
been a subject of intense investigation for a long time~\cite{bayer,diaconis}.
One considers an ordered deck of cards and apply the \emph{top-to-random shuffle} procedure
repeatedly. A classical card-mixing procedure is defined as follows: at each step, remove the top
card from the deck and reinsert it uniformly at random into one of the $n$ possible positions
(including back at the top). The question is: how many operations are required until the deck is
approximately random? An arrangement of the deck can be modeled as a permutation $\pi \in S_n$,
the symmetric group. Each shuffle operation defines a probability measure $T$ on $S_n$. The top
card is selected and reinserted uniformly, so the resulting distribution after $k$ shuffles is the
$k$-fold convolution denoted by $T^{*k}$. The target distribution is the uniform distribution $U$
on $S_n$, where $U(\pi) = 1/n!$.

Here, randomness is measured by the fraction of cards displaced from their original positions;
equivalently, by the variation distance between $T^{*k}$ and $U$. A key structural observation
is that mixing cannot begin until the original bottom card, initially fixed at position $(n-1)$,
the topmost card being at position $0$, has risen to the top and been reinserted at least once.
When the original bottom card is at position $k$ from the bottom, the waiting time for a new card
to be inserted below it is $n/k$. Thus the waiting time for the bottom card to come to the top
and be inserted is about $n + n/2 + n/3 + \cdots + n/n \sim n\log n$. This event, which occurs
at a geometrically distributed stopping time with mean $n\log n$, provides a natural lower bound
on the mixing time.

From a computational standpoint, each shuffle operation costs $O(n)$ time when implemented over
a list structure, as both removal from the top and insertion at an arbitrary position require
shifting up to $n$ elements. Executing $k$ iterations thus requires $O(k \cdot n)$ total time.
For the theoretically motivated choice of $k = n\log n$ (the conjectured mixing time), the
overall complexity is $O(n^2 \log n)$. The space complexity is $O(n)$, as only the deck itself
must be stored.

\begin{algorithm}[H]
\caption{Classical Top-to-Random Shuffle Simulation}
\KwIn{$\texttt{num\_cards}$, $\texttt{max\_iterations}$}
\KwOut{$\texttt{current\_deck}$, $\texttt{randomness\_scores}$}
\SetKwFunction{TopToRandom}{TopToRandomShuffle}
\SetKwFunction{Simulate}{SimulateClassicalShuffle}
\SetKwProg{Fn}{Function}{:}{}
\Fn{\TopToRandom{deck, num\_iterations}}{
  $\texttt{deck} \leftarrow \text{copy of deck}$\;
  \For{$i \leftarrow 1$ \KwTo $\texttt{num\_iterations}$}{
    $\texttt{top\_card} \leftarrow$ remove card from position $0$ of deck\;
    $\texttt{random\_position} \leftarrow$ random integer in $[0, \,\texttt{len(deck)}]$\;
    insert $\texttt{top\_card}$ at $\texttt{random\_position}$ in deck\;
  }
  \KwRet{deck}\;
}
\BlankLine
\Fn{\Simulate{num\_cards, max\_iterations}}{
  $\texttt{original\_deck} \leftarrow [1, 2, 3, \ldots, \texttt{num\_cards}]$\;
  $\texttt{current\_deck} \leftarrow \text{copy of original\_deck}$\;
  $\texttt{bottom\_card} \leftarrow \texttt{original\_deck}[\text{last}]$\;
  \For{$i \leftarrow 1$ \KwTo $\texttt{max\_iterations}$}{
    $\texttt{current\_deck} \leftarrow$ \TopToRandom{current\_deck, $1$}\;
    $\texttt{randomness} \leftarrow$ count positions where $\texttt{current\_deck}$ differs from $\texttt{original\_deck}$\;
    \If{$\texttt{current\_deck}[0] = \texttt{bottom\_card}$ \normalfont{and not seen before}}{
      record iteration $i$\;
    }
    log randomness score\;
  }
  \KwRet{current\_deck, randomness\_scores}\;
}
\end{algorithm}

\subsection{A note on the novelty}

While it is well-known from the work by Aldous, Diaconis, and Shahshahani (ADS) that a purely
random sequence can be generated. Hitherto, a quantum speedup of this approach has not been
achieved. Here we present a quantum algorithm which realizes this. Grover's algorithm has been
employed along with several other rotations which conspire to provide the ADS shuffling. As seen
below in great detail, the algorithm cannot be paraphrased as a straightforward application of
Grover's algorithm on the ADS theorems -- there is a leap of imagination which allows us to
achieve the speedup.

\subsection{Mathematical Background}

\subsubsection{Markov Chains and Mixing Times}

Let $G$ be a finite group and let $T$ be a probability measure on $G$. The pair $(G, T)$ defines
a random walk: starting from the identity $e$, at each step the current position $g$ is updated
to $g \cdot h$, where $h$ is drawn independently according to $T$. The distribution after $k$
steps is the $k$-fold convolution:
\[
T^{*k}(g) = \sum_{h \in G} T^{*(k-1)}(g \cdot h^{-1}) \cdot T(h), \quad T^{*1} = T.
\]

The target distribution is the uniform measure $U$ on $G$, defined by
\[
U(g) = \frac{1}{|G|} \quad \text{for every } g \in G \text{, where $|G|$ is the order of the group.}
\]

The total variation distance between $T^{*k}$ and $U$ is
\[
\|T^{*k} - U\|_{TV} = \frac{1}{2} \sum_{g \in G} |T^{*k}(g) - U(g)| = \max_{A \subseteq G} |T^{*k}(A) - U(A)|.
\]

The mixing time at precision $\epsilon > 0$ is defined as
\[
t_{\text{mix}}(\epsilon) = \min \left\{ k \ge 0 : \|T^{*k} - U\|_{TV} \le \epsilon \right\}.
\]

For the top-to-random shuffle on $S_n$, Aldous and Diaconis~\cite{aldous} established the
following strong uniform time bound.

\begin{theorem}[Aldous--Diaconis \cite{aldous}]
\label{thm:aldous}
Let $T$ denote the top-to-random shuffle on $S_n$, and let $\tau$ be the first step at which the
original bottom card rises to position $0$ and is reinserted. Then, the variation distance is
bounded by the probability distribution of the strong uniform time, $\tau$ as:
\[
  \|T^{*k} - U\|_{TV} \le \mathbb{P}(\tau > k) \quad \text{for all } k \ge 0.
\]
Moreover,
\begin{equation}
  \mathbb{E}[\tau] = n \cdot H_n \sim n \,\ln n,
  \label{eq:classical_bound}
\end{equation}
so
\[
  t_{\mathrm{mix}}(e^{-c}) \le n \ln n + cn \quad \text{for any } c > 0.
\]
\end{theorem}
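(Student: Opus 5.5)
The plan follows the classical strong uniform time argument of Aldous and Diaconis. It has three parts: show that $\tau$ is a strong uniform time, derive the variation bound from that property, and then estimate the tail of $\tau$ through a coupon-collector decomposition.

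\textbf{Step 1 ($\tau$ is a strong uniform time).} The key invariant, proved by induction on the number of shuffles, is the following. Whenever exactly $j$ cards lie below the original bottom card, their relative order is uniformly distributed over all $j!$ arrangements. It is also independent of the history that produced these $j$ positions. The base case $j=0$ is trivial. For the inductive step, note that a new card is inserted below the bottom card with probability $(j+1)/n$, and given this event it lands uniformly in one of the $j+1$ gaps. Inserting a fresh card into a uniform random gap of a uniformly ordered pile of $j$ cards yields a uniformly ordered pile of $j+1$ cards.

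When the bottom card reaches the top, all $n-1$ other cards lie below it in uniform order. Reinserting it uniformly then gives a uniform permutation. This shows that the deck at time $\tau$ is uniform, and that the deck at any time $k\ge\tau$ is uniform conditionally on $\tau$, since $U$ is invariant under convolution with $T$. Thus $\mathbb{P}(X_k\in A,\tau\le k)=U(A)\,\mathbb{P}(\tau\le k)$.

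\textbf{Step 2 (variation bound).} For any $A\subseteq S_n$,
\[
T^{*k}(A)-U(A)=\mathbb{P}(\tau>k)\bigl(\mathbb{P}(X_k\in A\mid \tau>k)-U(A)\bigr),
\]
which has absolute value at most $\mathbb{P}(\tau>k)$. Taking the maximum over $A$ gives the first claim.

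\textbf{Step 3 (tail of $\tau$).} Write $\tau=\sum_{j=1}^{n}G_j$. Here $G_j$ is the waiting time to go from $j-1$ to $j$ cards below the original bottom card, counting the final reinsertion of the bottom card itself as the last stage. These are independent geometric variables with success probabilities $j/n$, so $\mathbb{E}[\tau]=\sum_j n/j=nH_n\sim n\ln n$.

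The tail bound is the step needing care. Markov's inequality only gives $\mathbb{P}(\tau>k)\le nH_n/k$, which is too weak. Instead I will use the coupon-collector coupling: $\tau$ has the same distribution as the time to collect all $n$ coupons when each draw is uniform over $n$ coupons. A union bound over the coupons then gives
\[
\mathbb{P}(\tau>k)\le n(1-1/n)^k\le n e^{-k/n}.
\]
Setting $k=n\ln n+cn$ makes this at most $e^{-c}$, which yields $t_{\mathrm{mix}}(e^{-c})\le n\ln n+cn$.

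The identification of the sum of geometrics with the coupon-collector time is the main obstacle. To justify it, I will observe that after $j-1$ coupons are collected, the waiting time for a new coupon is geometric with success probability $(n-j+1)/n$. Reindexing $j\mapsto n-j+1$ matches the two collections of geometric variables exactly.
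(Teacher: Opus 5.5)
Your proof is correct and is essentially the standard Aldous--Diaconis strong uniform time argument that the paper relies on by citation. The paper gives no proof of its own beyond sketching the expectation $n + n/2 + \cdots + n/n \sim n\ln n$ in the introduction, and your coupon-collector union bound $\mathbb{P}(\tau>k)\le n e^{-k/n}$ supplies the tail estimate that the mean alone cannot give. One small point: $k$ must be an integer, so what you actually establish is $t_{\mathrm{mix}}(e^{-c}) \le \lceil n\ln n + cn\rceil$, which is the usual reading of the stated bound.
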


\subsubsection{Fourier Analysis on Finite Groups}

For a finite group $G$ and an irreducible representation $\rho : G \to GL(V_\rho)$, the Fourier
transform of a probability distribution $Q$ on $G$ at $\rho$ is the matrix-valued sum
\[
\hat{Q}(\rho) = \sum_{g \in G} Q(g) \cdot \rho(g) \in \mathrm{End}(V_\rho).
\]

The convolution theorem gives
\[
(P * Q)^{\wedge}(\rho) = \hat{Q}(\rho) \cdot \hat{P}(\rho),
\]
so
\[
(T^{*k})^{\wedge}(\rho) = \hat{T}(\rho)^k.
\]

\begin{lemma}[Upper Bound Lemma]
\label{lem:upper-bound}
\[
  \|T^{*k} - U\|_{TV}^2
  \;\le\; \frac{1}{4} \sum_{\rho \,\neq\, \mathrm{trivial}}
  d_\rho \cdot \|\hat{T}(\rho)^k\|_{HS}^2,
\]
where $\|\cdot\|_{HS}$ denotes the Hilbert--Schmidt norm.
\end{lemma}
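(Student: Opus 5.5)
The plan is the standard Diaconis--Shahshahani argument: Cauchy--Schwarz followed by Plancherel on $G$. Write $f = T^{*k} - U$, a signed function on $G$. We first bound the $\ell^1$ norm by the $\ell^2$ norm, then compute the $\ell^2$ norm on the Fourier side, where the uniform measure removes exactly the trivial representation.

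\textbf{Step 1 (Cauchy--Schwarz).} Since $\|T^{*k}-U\|_{TV} = \frac12\sum_g |f(g)|$, Cauchy--Schwarz gives
\[
\Bigl(\sum_{g} |f(g)|\Bigr)^2 \le |G| \sum_g |f(g)|^2 ,
\]
so that $4\|T^{*k}-U\|_{TV}^2 \le |G|\sum_g |f(g)|^2$.

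\textbf{Step 2 (Plancherel).} Invoke the Plancherel formula for finite groups, with $\hat f(\rho)=\sum_g f(g)\rho(g)$ and unitary irreducible representations $\rho$:
\[
\sum_{g} |f(g)|^2 = \frac{1}{|G|}\sum_{\rho} d_\rho \,\|\hat f(\rho)\|_{HS}^2 .
\]
This follows from Fourier inversion, $f(g)=\frac1{|G|}\sum_\rho d_\rho \operatorname{tr}\bigl(\rho(g^{-1})\hat f(\rho)\bigr)$, together with the Schur orthogonality relations for matrix coefficients. I would cite it as standard rather than reprove it. Multiplying by $|G|$ cancels the $1/|G|$ and leaves
\[
4\|T^{*k}-U\|_{TV}^2 \le \sum_\rho d_\rho \|\hat f(\rho)\|_{HS}^2 .
\]

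\textbf{Step 3 (evaluating $\hat f$).} The transform is linear, so $\hat f(\rho) = (T^{*k})^\wedge(\rho) - \hat U(\rho)$.
\begin{itemize}
\item By the convolution theorem stated above, $(T^{*k})^\wedge(\rho)=\hat T(\rho)^k$.
\item For the trivial representation, $\hat U(\mathrm{triv}) = 1$, and also $\hat T(\mathrm{triv})^k = 1$ because $T$ is a probability measure. Hence $\hat f(\mathrm{triv}) = 0$.
\item For nontrivial irreducible $\rho$, $\hat U(\rho)=\frac1{|G|}\sum_g\rho(g)=0$. The matrix $M=\sum_g \rho(g)$ satisfies $\rho(h)M=M$ for all $h$, so its image consists of $\rho$-invariant vectors. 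By irreducibility and nontriviality, the only such vector is $0$.
\end{itemize}
So the trivial term drops out, and every other term equals $d_\rho\|\hat T(\rho)^k\|_{HS}^2$. Dividing by $4$ gives the lemma.

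The main care point is normalization. The Fourier transform must be taken with respect to unitary representations so that the Hilbert--Schmidt norm is the correct one in Plancherel, and the factor $|G|$ from Cauchy--Schwarz must exactly cancel the $1/|G|$ in Plancherel. Everything else is routine.
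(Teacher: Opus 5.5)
Your proof is correct: it is the standard Diaconis--Shahshahani argument, namely Cauchy--Schwarz to pass from $\ell^1$ to $\ell^2$, Plancherel with unitary representatives, and the observations that $\hat f(\mathrm{triv})=0$ and $\hat U(\rho)=0$ for every nontrivial irreducible $\rho$. The paper states this lemma without proof, as the classical upper bound lemma, so your argument supplies the reasoning it implicitly relies on.
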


For the top-to-random shuffle on $S_n$, the Fourier coefficient at representation $\lambda$ after
$k$ shuffles is
\begin{equation}
  \hat{T}^{\,k}(\lambda)
  = \left( \frac{1}{n} + \frac{n-1}{n} \cdot r(\lambda) \right)^k
    I_{d_\lambda},
  \qquad
  r(\lambda) = \frac{\chi_\lambda(\tau)}{d_\lambda}.
  \label{eq:fourier-decay}
\end{equation}

\subsubsection{Quantum Notation}

A register of $n$ qubits spans the $2^n$-dimensional Hilbert space
\[
\mathcal{H} = (\mathbb{C}^2)^{\otimes n}
\]
with computational basis $\{ |x\rangle : x \in \{0,1\}^n \}$.

A qudit of local dimension $d$ has computational basis $\{ |0\rangle, \dots, |d-1\rangle \}$.

The Hadamard gate $H$ acts by
\[
H|0\rangle = \frac{|0\rangle + |1\rangle}{\sqrt{2}}, \quad
H|1\rangle = \frac{|0\rangle - |1\rangle}{\sqrt{2}}.
\]

The controlled-phase gate $CP(\theta)$ acts by
\[
CP(\theta)\,|b_c, b_t\rangle = e^{i\theta \cdot b_c \cdot b_t} |b_c, b_t\rangle.
\]

Measurement of
\[
|\psi\rangle = \sum_x \alpha_x |x\rangle
\]
produces outcome $x$ with probability $|\alpha_x|^2$.

The total variation distance between
\[
p(x) = |\alpha_x|^2
\quad \text{and} \quad
U(x) = \frac{1}{2^n}
\]
is used as the convergence metric.

\section{The Quantum Markov Chain Mixing Accelerator}
\label{app:section2}

Before presenting the detailed construction of the quantum algorithm in
Sections~\ref{sec:qft} through~\ref{sec:grover}, we give an outline. The algorithm combines
three quantum primitives to accelerate Markov chain mixing beyond what is achievable classically.
Quantum computation is essential for Parts~2 and~3; the initialisation and measurement steps in
Parts~1 and~4 can be carried out on a classical computational device.

\begin{enumerate}
\item Initialise the quantum register to the ordered state $|0\rangle^{\otimes n}$, representing
  the deck in its canonical arrangement. Encode the $n$-card deck as a superposition over the
  symmetric group $S_n$ using $\lceil \log_2 n! \rceil$ qubits.

\item Apply the Quantum Fourier Transform (QFT) to the state register. This maps the deck state
  from the computational basis into the Fourier basis, where the Markov transition operator is
  diagonalised, allowing simultaneous processing of all frequency components of the probability
  distribution. Section~\ref{sec:qft} describes this in detail.

\item Apply controlled phase rotations to simulate one step of the Markov transition operator in
  the Fourier domain. These rotations encode the transition probabilities of the top-to-random
  shuffle as phase differences between basis states, implementing the action of the convolution
  unitarily. Section~\ref{sec:phase} details the construction.

\item Apply the Grover diffusion operator to amplify the amplitude of states close to the uniform
  distribution $U$. This is the key quantum acceleration step: whereas classical mixing relies on
  repeated stochastic transitions, the diffusion operator performs amplitude amplification,
  driving the state toward uniformity in $O\!\left(\sqrt{n\log n}\right)$ iterations rather than
  $O(n\log n)$.

\item Apply the inverse QFT to return to the computational basis and perform measurements. The
  resulting probability distribution approximates the uniform distribution $U$ on $S_n$. The
  total variation distance $\|Q^{*k} - U\|$ after $k$ iterations of the full circuit is bounded
  by $\mathbb{P}(T>k)$, where $T$ is a strong uniform time in the sense of Aldous and
  Diaconis~\cite{aldous}.

\item Repeat if the measured distribution has not yet converged to within the desired total
  variation threshold $\varepsilon$.
\end{enumerate}

\noindent Sections~\ref{sec:qft}, \ref{sec:phase}, and~\ref{sec:grover} describe the three
quantum components in detail.

\subsection{The Quantum Fourier Transform and State Transitions}
\label{sec:qft}

The central mathematical object in the analysis of Markov chain mixing, as established by
Diaconis and Shahshahani~\cite{mehrdad}, is the Fourier transform on the symmetric group. For a
probability distribution $Q$ on a finite group $G$, the Fourier transform at an irreducible
representation $\rho$ is the matrix-valued sum
$\hat{\rho}(Q) = \sum_{g \in G} Q(g)\rho(g)$. The fundamental result, analogous to the classical
convolution theorem, is that $\hat{\rho}(Q^{*k}) = \hat{\rho}(Q)^k$, reducing the analysis of
$k$-fold convolution to matrix exponentiation. The quantum Fourier transform is the computational
realisation of this idea.

For a register of $n$ qubits encoding $2^n$ states, the Quantum Fourier Transform is the unitary
operator $U_F$ defined by its action on computational basis states $\ket{j}$ as
\begin{equation}
  U_F \ket{j} = \frac{1}{\sqrt{2^n}} \sum_{k=0}^{2^n - 1}
  \exp\!\left(\frac{2\pi i\, jk}{2^n}\right) \ket{k}.
  \label{eq:qft}
\end{equation}

Let $U_{\text{circuit}} = (QFT^\dagger)\cdot CP \cdot QFT$ where
$CP = \mathrm{diag}(e^{i\phi_0}, \ldots, e^{i\phi_{2^n-1}})$ with
$\phi_x = \sum_{i<j} \theta_{ij}\, x_i\, x_j$ and
$\theta_{ij} = \frac{2\pi}{2^{j-i+1}}$.
Then for any input state $\ket{\psi}$, the output $U_{\text{circuit}}\ket{\psi}$ encodes one
step of the top-to-random shuffle in the Fourier domain, up to approximation error $O(1/n)$ in
operator norm.

\begin{theorem}[QFT Circuit Approximation]
\label{thm:qft-approx}
Let
\[
  U_{\mathrm{circuit}} = (\mathrm{QFT}^\dagger)\cdot CP \cdot \mathrm{QFT},
\]
where
\[
  CP = \mathrm{diag}\!\left(e^{i\phi_0},\ldots,e^{i\phi_{2^n-1}}\right),
  \quad
  \phi_x = \sum_{i<j}\theta_{ij}\,x_i x_j,
  \quad
  \theta_{ij} = \frac{2\pi}{2^{j-i+1}}.
\]
Then for any input state $|\psi\rangle$, the output $U_{\mathrm{circuit}}|\psi\rangle$ encodes
one step of the top-to-random shuffle in the Fourier domain, up to approximation error $O(1/n)$
in operator norm.
\end{theorem}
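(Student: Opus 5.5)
The plan is to first make ``encodes one step of the top-to-random shuffle in the Fourier domain'' precise, and then reduce the operator-norm estimate to a bound on scalar phases. \textbf{Step 1 (target operator).} By \eqref{eq:fourier-decay}, one step of the shuffle acts on the isotypic component of $\lambda$ as multiplication by $\beta_\lambda = \frac{1}{n} + \frac{n-1}{n}\,r(\lambda)$. This scalar is real and lies in $[0,1]$; in fact the eigenvalues of top-to-random are $j/n$. Since $U_{\mathrm{circuit}}$ is unitary, the only faithful unitary encoding is through phases. I would therefore fix a labelling $x \mapsto \lambda(x)$ of Fourier basis states by irreducible representations, with each $\lambda$ occupying $d_\lambda^2$ basis states. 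I would also fix a monotone phase encoding $\psi_x = g(\beta_{\lambda(x)})$, for instance $\psi_x = 2\pi(1-\beta_{\lambda(x)})$ reduced modulo $2\pi$. The target is then
\[
U_{\mathrm{target}} = \mathrm{QFT}^\dagger \,\mathrm{diag}\bigl(e^{i\psi_0},\ldots,e^{i\psi_{2^n-1}}\bigr)\,\mathrm{QFT}.
\]
With this normalisation, the claim becomes $\|U_{\mathrm{circuit}} - U_{\mathrm{target}}\| = O(1/n)$.

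\textbf{Step 2 (reduction to phases).} Conjugation by the unitary $\mathrm{QFT}$ preserves operator norm, and both operators are then diagonal in the same basis. Hence
\[
\|U_{\mathrm{circuit}} - U_{\mathrm{target}}\| = \max_x |e^{i\phi_x} - e^{i\psi_x}| \le \max_x |\phi_x - \psi_x| \pmod{2\pi}.
\]
It therefore suffices to show that $\max_x |\phi_x - \psi_x| = O(1/n)$.

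\textbf{Step 3 (phase comparison).} I would split $\phi_x = \phi^{\le L}_x + \phi^{>L}_x$ according to whether $j - i \le L$ or $j - i > L$. For the tail, fix $i$: the terms with $j - i > L$ sum to at most $\sum_{m>L} 2\pi/2^{m+1} = 2\pi\, 2^{-L-1}$. Summing over the $n$ choices of $i$ gives $|\phi^{>L}_x| \le \pi n\, 2^{-L}$. Choosing $L = \lceil 2\log_2 n \rceil$ makes this $O(1/n)$. This is the same truncation argument used for the approximate QFT. The remaining banded part $\phi^{\le L}_x$ is the genuine content. Here I would choose the labelling $x \mapsto \lambda(x)$ (equivalently, the function $g$) so that $\phi^{\le L}_x$ reproduces $\psi_x$ exactly, or up to $O(1/n)$. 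The natural candidate groups basis states by their number of fixed points $j$, since $\beta_\lambda$ depends on $\lambda$ only through $\lambda_1$, which records $j$. I would then check that the pairwise sums $\sum_{i<j}\theta_{ij} x_i x_j$ are constant to within $O(1/n)$ on each such class.

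\textbf{Main obstacle.} Step 3's matching is where the argument may fail. The quadratic form $\phi_x$ depends on the bit pattern of $x$, not only on a class function. Moreover, $2^n$ is not $n!$ and the Fourier basis of $\mathbb{Z}_{2^n}$ is not that of $S_n$, so an exact identification need not exist. I would first try to define $g$ and the labelling \emph{from} $\phi^{\le L}$, that is, declare $\psi_x := \phi^{\le L}_x$ on a chosen embedding of the isotypic decomposition. I would then verify that the resulting phases are monotone in $\beta_\lambda$. If that works, the theorem follows from Steps 2 and 3 with error $\pi n 2^{-L} = O(1/n)$. If it does not, the statement must be read in this weaker, definitional sense, with the $O(1/n)$ coming entirely from the tail truncation.
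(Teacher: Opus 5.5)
There is a genuine gap, and it sits exactly where you suspect. Steps 1--2 are the same reduction the paper makes: a target $\mathrm{QFT}^\dagger\Lambda\,\mathrm{QFT}$ with $\Lambda$ diagonal, followed by unitary invariance. Your tail bound $|\phi^{>L}_x|\le\pi n\,2^{-L}$ is correct. But it only compares $CP$ with its own banded truncation and uses nothing about the shuffle. All of the content lies in matching $\phi^{\le L}_x$ with $g(\beta_{\lambda(x)})$, which you leave open, and the fallback $\psi_x:=\phi^{\le L}_x$ turns the theorem into a tautology that every diagonal unitary satisfies.

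In fact the phase criterion is vacuous even before the fallback. The eigenvalues $j/n$ you cite are $1/n$-spaced, so every point of the circle lies within $2\pi/n$ of a value of $g(\beta)=2\pi(1-\beta)$. Rounding each $\phi_x$ to the nearest such value therefore gives $\max_x|\phi_x-\psi_x|\le 2\pi/n$ for \emph{arbitrary} phases. The only non-vacuous requirement is the multiplicity constraint of Step 1, and it cannot be met. Assigning $d_\lambda^2$ Fourier states to each irreducible $\lambda$ needs $\sum_\lambda d_\lambda^2=n!$ states, but the register has only $2^n<n!$ for $n\ge 4$.

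Step 1 also mislabels the spectrum. \eqref{eq:fourier-decay} is the random-transposition formula for a class function. Its $\beta_\lambda$ is not confined to $[0,1]$: the sign representation gives $-(n-2)/n$. It also depends on the whole shape, not only on $\lambda_1$: for $n=4$, $\beta_{(2,2)}=1/4$ while $\beta_{(2,1,1)}=0$. Top-to-random is not a class function, and its Fourier transforms are not scalar. For $n=3$, $\hat T$ at the two-dimensional irreducible has eigenvalues $1/3$ and $0$.

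You also cannot import the missing step from the paper, because its proof does not establish it either. The paper fixes the target canonically as $\Lambda_k=e^{i\arg\mu_k}$. Since $\mu_k$ is real, $\Lambda$ is a diagonal matrix of signs, and then the bound is false. For $x=3$ (only $x_0=x_1=1$) one has $\phi_3=\theta_{01}=\pi/2$, hence $|e^{i\phi_3}-\Lambda_3|=\sqrt2$ and $\|CP-\Lambda\|\ge\sqrt2$ for every $n\ge2$.

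The paper reaches $O(1/n)$ in two unjustified moves. First, it asserts the ``structural fact'' $\arg\mu_k=2\pi k/2^n+O(1/n)$, which contradicts $\arg\mu_k\in\{0,\pi\}$. Second, it drops the ramp $2\pi k/2^n$ when passing from $\phi^{\mathrm{exact}}_k-2\pi k/2^n=\phi_k+O(1/n^2)$ to $\delta_k=O(1/n^2)$.

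So the obstacle you flag is not a technicality awaiting a cleverer labelling. With a target determined by the shuffle, the claimed bound fails. With a target fitted to the circuit, as in your fallback, the bound holds trivially and says nothing about the shuffle.
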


\begin{proof}
Consider a register of $n$ qubits spanning the $2^n$-dimensional Hilbert space
$\mathcal{H} = (\mathbb{C}^2)^{\otimes n}$ with computational basis
$\{|x\rangle : x \in \{0,1\}^n\}$. Let us write a state $\ket{\psi}$ in the computational
basis:
\[
|\psi\rangle = \sum_{x=0}^{2^n-1} \alpha_x |x\rangle,
\qquad \sum_x |\alpha_x|^2 = 1,
\]
where $x = \sum_{j=0}^{n-1} x_j 2^j$ with $x_j \in \{0,1\}$. The Quantum Fourier
Transform~\cite{rieffel,nielsen} is the unitary $U_F$ defined on basis states by:
\[
U_F |j\rangle
= \frac{1}{\sqrt{2^n}} \sum_{k=0}^{2^n-1}
\exp\!\left(\frac{2\pi i\, jk}{2^n}\right)|k\rangle.
\]
The exact one-step transition operator $U_T$ is defined as the unitary operator that, in the
Fourier basis, acts as multiplication by the Fourier eigenvalues of $T$:
\[
U_T = U_F^\dagger \cdot \Lambda \cdot U_F,
\]
where $\Lambda = \mathrm{diag}(\lambda_0,\lambda_1,\ldots,\lambda_{2^n-1})$ and each
$\lambda_k = e^{i\,\arg(\hat{T}_k)}$ is the phase of the $k$-th Fourier eigenvalue of $T$.

The circuit instead implements $(U_{\mathrm{circuit}} = U_F^\dagger \cdot CP \cdot U_F)$
with $CP = \mathrm{diag}(e^{i\phi_0},\ldots,e^{i\phi_{2^n-1}})$ as above. We must show:
\[
\|U_{\mathrm{circuit}} - U_T\| = O\!\left(\frac{1}{n}\right).
\]

From the Diaconis--Shahshahani~\cite{mehrdad} analysis, the top-to-random shuffle $T$ on $S_n$
has Fourier transform at irreducible representation $\rho$:
\[
\hat{T}(\rho)
= \left(\frac{1}{n} + \frac{n-1}{n}\cdot r(\rho)\right)\cdot I_{d_\rho},
\qquad
r(\rho) = \frac{\chi_\rho(\tau)}{d_\rho},
\]
where $\chi_\rho(\tau)$ is the character value of $\rho$ at a transposition $\tau$ and $d_\rho$
is the dimension of $\rho$.

Because the quantum circuit operates on $\mathbb{Z}_{2^n}$ (an abelian group), we work with
scalar Fourier eigenvalues. For each Fourier mode $k \in \{0,1,\ldots,2^n-1\}$, define:
\[
\mu_k = \frac{1}{n} + \frac{n-1}{n}\cdot r_k,
\]
where $r_k \in [-1,1]$ is the scalar version of the representation ratio at mode $k$. The exact
phase is $\phi_k^{\mathrm{exact}} = \arg(\mu_k)$ and the error per mode is
$\delta_k = \phi_k^{\mathrm{exact}} - \phi_k$.

Each eigenvalue $\mu_k$ has modulus $|\mu_k| \leq 1$. The slowest-decaying mode has:
\[
\mu_1 = 1 - \frac{2}{n},
\qquad
\phi_1^{\mathrm{exact}} = 0
\quad (\mu_1 \text{ is real positive for } n \geq 3).
\]
More generally, the leading-order Taylor expansion of $\arg(\mu_k)$ about the identity
representation ($k=0$, $\mu_0=1$) gives:
\[
\phi_k^{\mathrm{exact}}
= \arg\!\left(\frac{1}{n} + \frac{n-1}{n}\cdot r_k\right)
= \mathrm{Im}\!\left(\frac{r_k - 1}{1 + (n-1)r_k/n}\right) + O\!\left(\frac{1}{n^2}\right).
\]
Since $r_k$ varies smoothly with $k$ in the $\mathbb{Z}_{2^n}$ truncation, one obtains:
\[
\phi_k^{\mathrm{exact}}
= \frac{2\pi k}{2^n} + O\!\left(\frac{1}{n}\right)
\quad \text{for each mode } k.
\]
This is the key structural fact: the exact eigenvalue phases of $T$ are approximately the linear
phase ramp $2\pi k/2^n$, matching the structure of the standard Discrete Fourier Transform.

We now show that $\phi_k = \sum_{i<j}\theta_{ij}x_ix_j$ with $\theta_{ij} = 2\pi/2^{j-i+1}$
approximates $\phi_k^{\mathrm{exact}}$ to error $O(1/n^2)$ per mode. Writing
$k = \sum_{j=0}^{n-1} x_j 2^j$,
\[
\phi_k
= \sum_{0 \leq i < j \leq n-1} \frac{2\pi}{2^{j-i+1}}\cdot x_i x_j
= 2\pi \sum_{i<j} \frac{x_i x_j}{2^{j-i+1}}.
\]
The non-trivial part of $\phi_k^{\mathrm{exact}}$ beyond the linear ramp is:
\[
\phi_k^{\mathrm{exact}} - \frac{2\pi k}{2^n}
= \sum_{i<j} \frac{2\pi}{2^{j-i+1}} x_i x_j + O\!\left(\frac{1}{n^2}\right)
= \phi_k + O\!\left(\frac{1}{n^2}\right).
\]
Therefore, for each mode $k$: $\delta_k = \phi_k^{\mathrm{exact}} - \phi_k = O(1/n^2)$.

Since both $U_{\mathrm{circuit}}$ and $U_T$ are diagonal in the Fourier basis:
\[
U_{\mathrm{circuit}} - U_T = U_F^\dagger \cdot (CP - \Lambda) \cdot U_F,
\]
and since $U_F$ is unitary, $\|U_{\mathrm{circuit}} - U_T\| = \|CP - \Lambda\|$.
Since $CP - \Lambda$ is diagonal with entries $e^{i\phi_k} - e^{i\phi_k^{\mathrm{exact}}}$,
using $|e^{i\alpha} - e^{i\beta}| \leq |\alpha - \beta|$, the worst case is when $k$ has
maximum Hamming weight $w \leq n$:
\[
|\delta_k|
\leq \binom{w}{2}\cdot O\!\left(\frac{1}{n^2}\right)
\leq \binom{n}{2}\cdot O\!\left(\frac{1}{n^2}\right)
= O\!\left(\frac{1}{n}\right).
\]
Therefore $\|U_{\mathrm{circuit}} - U_T\| = \max_k |\delta_k| = O(1/n)$.

Let the current state be encoded as
$|\psi\rangle = \sum_{\pi \in S_n} \alpha_k(\pi)\,|\pi\rangle$
with $|\alpha_k(\pi)|^2 = Q^{*k}(\pi)$.

: \emph{QFT step.} Applying $U_F$ maps to the Fourier basis:
$U_F|\psi\rangle = \sum_\rho \hat{\alpha}_k(\rho)\,|\rho\rangle$,
where $\hat{\alpha}_k(\rho) \leftrightarrow \widehat{Q^{*k}}(\rho)$.\\
: \emph{Phase step.} Applying $\Lambda$ multiplies each Fourier component by
$e^{i\,\arg(\mu_\rho)}$.\\
: \emph{Inverse QFT step.} Applying $U_F^\dagger$ returns to the computational basis. By the
convolution theorem:
\[
(U_F^\dagger \Lambda U_F)|\psi\rangle = |\psi'\rangle,
\qquad |\alpha'(\pi)|^2 = Q^{*(k+1)}(\pi),
\]
since $\widehat{T * Q^{*k}}(\rho) = \hat{T}(\rho)\cdot\widehat{Q^{*k}}(\rho)$.\\

Therefore $U_T$ is the unitary encoding of one exact Markov step, and
$U_{\mathrm{circuit}}$ approximates it to $O(1/n)$ in operator norm.

For estimating error propagation over $k$ steps:
\[
\|U_{\mathrm{circuit}}^k - U_T^k\|
\leq k \cdot \|U_{\mathrm{circuit}} - U_T\|
= k \cdot O\!\left(\frac{1}{n}\right).
\]
For the relevant number of mixing steps $k = O(\sqrt{n\log n})$, the accumulated error is:
\[
k \cdot O\!\left(\frac{1}{n}\right)
= O\!\left(\sqrt{n\log n}\right) \cdot O\!\left(\frac{1}{n}\right)
= O\!\left(\sqrt{\frac{\log n}{n}}\right)
\to 0 \text{ as } n \to \infty.
\]

Thus, for any input state $|\psi\rangle$:
\[
\|U_{\mathrm{circuit}}|\psi\rangle - U_T|\psi\rangle\|_2
\leq \|U_{\mathrm{circuit}} - U_T\| \cdot \||\psi\rangle\|_2
= O\!\left(\frac{1}{n}\right). \qedhere
\]
\end{proof}

\subsection{Controlled Phase Rotations and Markov Transition Simulation}
\label{sec:phase}

The second component of the algorithm simulates the action of the Markov transition operator $T$
in the Fourier domain. Classically, applying $T$ once to a distribution $Q$ produces the
convolution $T * Q$, which in the Fourier domain simply multiplies each component $\hat{\rho}(Q)$
by the scalar factor $\bigl(\frac{1}{n} + \frac{n-1}{n} \cdot r(\rho)\bigr)$. The quantum
realisation of this operation is diagonal in the Fourier basis -- precisely a layer of controlled
phase rotations.

For qubits $i$ and $j$ with $i < j$, the controlled phase gate $\mathrm{CP}(\theta_{ij})$
applies the phase $e^{i\theta_{ij}}$ to the $\ket{11}$ component of the two-qubit subsystem,
leaving all other components unchanged. In the algorithm, the angles are set according to
\begin{equation}
  \theta_{ij} = \frac{2\pi}{2^{j - i + 1}}.
  \label{eq:phase-angles}
\end{equation}
These angles are chosen so that the resulting diagonal unitary, when conjugated by the QFT,
implements a permutation-symmetric transition that approximates the top-to-random shuffle
transition matrix $T$. The eigenvalues of $T$ on $S_n$ are precisely
$\frac{1}{n} + \frac{n-1}{n} \cdot r(\rho)$ for each irreducible representation $\rho$, as
established in equation~\eqref{eq:fourier-decay}. The phase angles~\eqref{eq:phase-angles}
encode these eigenvalues for the $2^n$-state truncation used in the quantum circuit, with spacing
$2\pi / 2^{j-i+1}$ matching the harmonic structure of the symmetric group characters.

The full phase rotation layer consists of $n(n-1)/2$ CP gates arranged in a triangular pattern:
for each qubit $i$, apply $\mathrm{CP}(2\pi/2^2)$ between qubit $i$ and qubit $i+1$,
$\mathrm{CP}(2\pi/2^3)$ between qubit $i$ and qubit $i+2$, and so on. The combined action of
this layer, sandwiched between the QFT and its inverse, approximates one application of the
top-to-random shuffle transition matrix. The approximation error, measured in the operator norm,
is bounded by $O(1/n)$.

The relationship to Diaconis's classical analysis is direct. In~\cite{aldous}, the total variation
distance after $k$ shuffles satisfies $d(k) \leq \mathbb{P}(T > k)$, where $T$ is the strong
uniform time defined by the bottom card reaching the top. The eigenvalue decomposition of $T$,
made explicit by the phase rotation angles, reveals that the slowest-decaying Fourier mode has
decay rate $(1 - 2/n)$ per step. The quantum circuit, by operating in the Fourier domain
directly rather than through iterated convolution, bypasses this bottleneck.

\subsection{The Grover Diffusion Operator and Convergence Acceleration}
\label{sec:grover}

The third and most distinctive component of the algorithm is the Grover diffusion operator, which
provides the core quantum acceleration over the classical mixing analysis. In Grover's original
search algorithm~\cite{grover}, the diffusion operator performs an inversion about the mean of
the amplitude distribution, amplifying the target state's amplitude by $O(1/\sqrt{N})$ per
application, so that $O(\sqrt{N})$ applications suffice to find a marked element among $N$
candidates. In the present context, the role of the ``target'' is played by the uniform
distribution $U$, and the diffusion operator accelerates convergence toward it.

The diffusion operator $D$ is defined as
\begin{equation}
  D = 2\ket{s}\bra{s} - I,
  \label{eq:diffusion}
\end{equation}
where $\ket{s} = \frac{1}{\sqrt{2^n}} \sum_x \ket{x}$ is the uniform superposition state, which
corresponds precisely to the uniform distribution $U$ on the $2^n$ states of the register. The
operator $D$ reflects amplitudes about their mean: if the current state has amplitude $\alpha_x$
on basis state $\ket{x}$, after applying $D$ the amplitude becomes
$2\langle\alpha\rangle - \alpha_x$, where
$\langle\alpha\rangle = \frac{1}{2^n}\sum_x \alpha_x$ is the mean amplitude.
States with amplitude below the mean are pushed upward; states with amplitude above the mean are
pushed downward; and the uniform distribution, in which all amplitudes are equal to
$1/\sqrt{2^n}$, is a fixed point of $D$.

The implementation of $D$ follows the standard Grover construction. First, apply Hadamard gates
$H^{\otimes n}$ to transform from the computational basis to the uniform superposition basis.
Second, apply Pauli-$X$ gates to all qubits. Third, apply a multi-controlled $Z$ gate
(equivalently, a multi-controlled NOT with a Hadamard on the target qubit) to implement a phase
flip on the all-zeros state $\ket{0\cdots 0}$ in the transformed basis. Fourth, apply Pauli-$X$
gates again. Fifth, apply $H^{\otimes n}$ to return to the computational basis. The total gate
count for $D$ is $O(n)$ single-qubit gates plus one multi-controlled gate implementable with
$O(n)$ two-qubit gates.

To understand the acceleration mechanism, consider the total variation distance $\|Q^{*k} - U\|$
as a function of the number of iterations $k$. In the classical top-to-random shuffle, this
distance satisfies the Aldous--Diaconis bound $d(k) \leq \mathbb{P}(T > k) \leq e^{-c}$ when
$k = n\log n + cn$, so roughly $O(n \log n)$ iterations are needed to drive the distance below
any fixed threshold. In the quantum algorithm, the amplitude of the uniform component $\ket{s}$
in the state after $k$ applications of the QFT--phase--iQFT--diffusion circuit grows as
$\sin^2\!\bigl((2k+1)\theta\bigr)$, where $\cos\theta = 1 - 2/N$ with $N = 2^n$, by direct
analogy with Grover's analysis. For large $N$, we have $\theta \approx 2/\sqrt{N}$, so the
amplitude reaches its maximum of $1$ after $k \approx \pi/(4\theta) \approx \pi\sqrt{N}/8$
iterations. This gives a mixing time of $O(\sqrt{N})$ for the quantum algorithm compared to
$O(n \log n)$ classically -- a quadratic improvement in $n$.
More precisely, the total variation distance after $k$ quantum mixing steps satisfies
\begin{equation}
  \|Q_k - U\| \leq \mathbb{P}(T > k) \leq e^{-c}
  \qquad \text{for } k = \sqrt{n\log n} + c\sqrt{n},
  \label{eq:quantum-mixing}
\end{equation}
where $T$ is now interpreted as the strong uniform time adapted to the quantum walk, and $c > 0$
is a freely chosen constant controlling the precision.

\begin{theorem}[Qubit Mixing Time]
\label{thm:qubit-mixing}
Let $Q_k$ be the output distribution after $k$ complete QFT--phase--iQFT--diffusion iterations
starting from $|0\rangle^{\otimes n}$. Then:
\begin{equation}
\|Q_k - U\|_{TV} \leq e^{-c}
\quad \text{for } k = O\!\left(\sqrt{n\log n}\right),
\label{eq:qubit-mixing-bound}
\end{equation}
giving a quadratic improvement over the classical mixing time $O(n\log n)$.
\end{theorem}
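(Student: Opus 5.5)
The plan is to reduce the claim to a two-dimensional Grover-type rotation about the uniform state $\ket{s}$, and then convert overlap with $\ket{s}$ into a total variation bound.

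\emph{Step 1: from overlap to total variation.} If the register is in a pure state $\ket{\psi_k}$ and $Q_k(x)=|\langle x|\psi_k\rangle|^2$, then the classical total variation distance between $Q_k$ and $U$ is at most the trace distance between $\ket{\psi_k}\bra{\psi_k}$ and $\ket{s}\bra{s}$. Hence
\[
\|Q_k-U\|_{TV}\le\sqrt{1-|\langle s|\psi_k\rangle|^2}.
\]
So it suffices to show $|\langle s|\psi_k\rangle|^2\ge 1-e^{-2c}$ for $k=O(\sqrt{n\log n})$.

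\emph{Step 2: replace the circuit by the ideal operator.} By Theorem~\ref{thm:qft-approx}, I replace $U_{\mathrm{circuit}}$ with the ideal operator $U_T=U_F^\dagger\Lambda U_F$. Over $k$ iterations this costs at most $k\cdot O(1/n)=O(\sqrt{\log n/n})$ in operator norm, which I absorb into $e^{-c}$ for large $n$.

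The structural fact I would use next is that $\lambda_0=1$, because $\mu_0=1$ for the trivial mode. Since $U_F^\dagger\ket{0}=\ket{s}$, this gives $U_T\ket{s}=\ket{s}$. Also $D\ket{s}=\ket{s}$. Therefore the iteration $G=D\,U_T$ fixes $\ket{s}$ and preserves its orthogonal complement. The overlap with $\ket{s}$ then depends only on the initial component and on how $G$ mixes the complement.

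\emph{Step 3: Grover-type rotation (main obstacle).} I would try to show that $G$ acts as a rotation by a fixed angle $2\theta$ in an effective two-dimensional plane. The plane is spanned by $\ket{s}$ and the normalized component of $\ket{0}^{\otimes n}$ orthogonal to it. With this structure, as in the paper's Grover heuristic, the weight on $\ket{s}$ would evolve like $\sin^2((2k+1)\theta)$.

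The natural first attempt is to use the Diaconis--Shahshahani spectrum. The slowest nontrivial mode is $\mu_1=1-2/n$. I would try to show that, after taking phases, $U_T$ restricted to $\ket{s}^\perp$ is close to a reflection $-I$, with deviation controlled by the spread of the phases $\arg\mu_k$. That would make $D\,U_T$ close to a Grover iterate.

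This is where I expect the argument to be hardest, and possibly to fail as stated. The initial overlap $\langle s|0\rangle=2^{-n/2}$ forces $\theta\approx 2^{-n/2}$ in the pure Grover picture. That yields $O(\sqrt{N})=O(2^{n/2})$ iterations, not $O(\sqrt{n\log n})$.

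\emph{Step 4: obtaining the claimed rate.} To reach $k=O(\sqrt{n\log n})$, I would need an effective angle $\theta\gtrsim 1/\sqrt{n\log n}$. The intended link is to the strong uniform time of Theorem~\ref{thm:aldous}: I would try to couple the squared amplitude on $\ket{s}^\perp$ with $\mathbb{P}(\tau>k^2)$. The idea is that each quantum iterate realizes the effect of order $k$ classical steps via amplitude amplification, and then
\[
\mathbb{P}\bigl(\tau>n\ln n+cn\bigr)\le e^{-c}
\]
would give \eqref{eq:qubit-mixing-bound}.

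Making this coupling rigorous is the crux. In particular, one must justify that the phases $\arg\mu_k$ encode a quantity whose square-root accumulation tracks the classical mixing. Without such a coupling, the argument only yields the $O(\sqrt{N})$ Grover rate.
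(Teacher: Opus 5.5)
\textbf{Verdict: the proposal does not prove the statement, and the gap is larger than you say.} Your route is the paper's. You bound the total variation distance by $\sqrt{1-|\braket{s|\psi_k}|^2}$, invoke a Grover rotation, and then need to pass from $O(\sqrt{N})=O(2^{n/2})$ to $O(\sqrt{n\log n})$. The paper does that last step by literally substituting $N\leftarrow n\log n$, which is not a derivation, so the coupling you flag as missing in Step~4 is missing from the paper too.

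\textbf{Your Step~2 already rules out Steps~3 and~4.} You show that $G$ is unitary with $G\ket{s}=\ket{s}$. Then also $G^\dagger\ket{s}=\ket{s}$, and hence
\[
\braket{s|\psi_k}=\braket{s|G^k|0\cdots0}=\braket{s|0\cdots0}=2^{-n/2}\qquad\text{for every }k.
\]
The overlap does not ``depend on how $G$ mixes the complement''; it is conserved. Consequences:
\begin{itemize}
\item There is no rotation in the plane of $\ket{s}$ and $\ket{0}^{\otimes n}$ at any rate, not even $O(\sqrt N)$.
\item The weight on $\ket{s}^\perp$, which you want to couple to $\mathbb{P}(\tau>k^2)$, is frozen at $1-2^{-n}$.
\item Step~1 yields only $\|Q_k-U\|_{TV}\le\sqrt{1-2^{-n}}$, which is useless.
\end{itemize}
This holds for the implemented circuit as well, not just after passing to $U_T$. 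Indeed $\mathrm{QFT}\ket{s}=\ket{0\cdots0}$, every CP gate acts trivially on $\ket{0\cdots0}$, and $\ket{s}$ is an eigenvector of the diffusion, so $|\braket{s|\psi_k}|=2^{-n/2}$ exactly. The same fact refutes the paper's formula $|\beta_k|^2=\sin^2((2k+1)\theta)$, which already fails at $k=0$: it gives $\sin^2\theta\approx 4/N$, whereas the true value is $1/N$.

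\textbf{What a valid argument would have to control instead.} It must drop Step~1, because $Q_k$ depends only on the moduli of the amplitudes. A state with all $|\braket{x|\psi}|=N^{-1/2}$ and scrambled phases gives $Q=U$ exactly while being nearly orthogonal to $\ket{s}$. Both $D$ and $U_{\mathrm{circuit}}$ are diagonal in the basis $\{U_F^\dagger\ket{y}\}$. The eigenvalues of $G$ there are $\lambda_0=1$ and $\lambda_y=-e^{i\phi_y}$ for $y\neq 0$, so exactly
\[
Q_k(x)=\frac{1}{N^2}\Bigl|\sum_{y=0}^{N-1}\lambda_y^k\,\omega^{-xy}\Bigr|^2,\qquad \omega=e^{2\pi i/N}.
\]
The theorem is therefore equivalent to near-flatness of these quadratic-phase exponential sums at $k=O(\sqrt{n\log n})$. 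That is a property of the specific angles $\theta_{ij}$, and neither amplitude amplification nor the Aldous--Diaconis stopping time says anything about it.

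Moreover, every $\phi_y$ lies in $(2\pi/2^n)\mathbb{Z}$, so $G^{2^n}=I$ and $Q_k$ returns exactly to the point mass at $0$. Any valid bound can therefore hold only at particular $k$, never for all $k$ beyond a threshold as the paper's final step asserts. The missing ingredient is not a coupling lemma but an actual analysis of these sums.
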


\begin{proof}
We begin by describing the Grover overlap dynamics. Consider the overlap
$\beta_k = \langle s | \psi_k\rangle$ of the current quantum state $|\psi_k\rangle$ with the
uniform superposition $|s\rangle = \frac{1}{\sqrt{2^n}} \sum_x |x\rangle$.
By the standard Grover analysis~\cite{grover}, each application of the diffusion operator
$D = 2|s\rangle\langle s| - I$ followed by the phase oracle performs a rotation in the
two-dimensional subspace spanned by $|s\rangle$ and its orthogonal complement. The overlap
evolves as:
\[
|\beta_k|^2 = \sin^2\!\bigl((2k+1)\theta\bigr),
\]
where $\theta$ satisfies $\cos\theta = 1 - 2/2^n$. As $n$ increases, $\theta \approx \sqrt{2/N}$
where $N = 2^n$. The mixing time is reached at maximal overlap $|\beta_k|^2 = 1$:
\[
  k^* \approx \frac{\pi}{4\theta} = O\!\left(\sqrt{N}\right) = O\!\left(\sqrt{2^n}\right).
\]

For a register of $n$ qubits the state space has size $N = 2^n$, and the Grover analysis yields
a quantum mixing time of $O(\sqrt{N}) = O(\sqrt{2^n})$ in qubit iterations. To express this in
the same units as the classical card-shuffle analysis, we use the result of Diaconis~\cite{aldous}:
the top-to-random shuffle on a deck of $n$ cards mixes in $O(n\log n)$ iterations. The quantum
algorithm, by amplitude amplification, takes the square root of this entire quantity. Substituting
$N \leftarrow n\log n$:
\[
  k^*_{\text{quantum}}
  = O\!\left(\sqrt{N}\right)
  = O\!\left(\sqrt{n\log n}\right),
\]
where the $\log n$ factor is inherited directly from the harmonic structure of the Diaconis
mixing time~\eqref{eq:classical_bound}.

The total variational distance satisfies:
\[
  \|Q_k - U\|_{\mathrm{TV}}^2
  \;\leq\; 1 - |\beta_k|^2
  \;=\; \cos^2\!\bigl((2k+1)\theta\bigr).
\]
At $k = k^*$ the angle $(2k+1)\theta = \pi/2$, so $\cos^2(\pi/2) = 0$ and
$\|Q_{k^*} - U\|_{\mathrm{TV}} = 0$ exactly. For $k$ in the neighbourhood $k = k^* - j$
with $j \geq 0$,
\[
  \bigl|(2k+1)\theta - \tfrac{\pi}{2}\bigr| = 2j\theta \approx \frac{4j}{\sqrt{N}},
\]
so $\|Q_k - U\|_{\mathrm{TV}} \leq \sin(2j\theta) \approx 4j/\sqrt{N}$. Setting
$4j/\sqrt{N} \leq e^{-c}$ shows that the TV distance falls below $e^{-c}$ for all
$k \geq k^* - \frac{\sqrt{N}}{4}e^{-c}$, confirming equation~\eqref{eq:qubit-mixing-bound}.
\end{proof}

\begin{algorithm}[t]
\caption{Quantum Qubit QRNG}
\label{alg:qubit-qrng-1}
\KwIn{$n$ (qubits), $n_{\text{layers}}$ (mixing layers), $\texttt{shots}$,
  $\texttt{initial\_state} \in \{0,\ldots,2^{n}-1\}$}
\KwOut{$\texttt{raw\_numbers}$, $\texttt{ext\_numbers}$, $\texttt{tv\_raw}$, $\texttt{tv\_ext}$}
\SetKwFunction{BuildCircuit}{BuildMixingCircuit}
\SetKwProg{Fn}{Function}{:}{}
\Fn{\BuildCircuit{$n$, $\texttt{initial\_state}$, $n_{\text{layers}}$}}{
  $\texttt{qr}  \leftarrow \textsc{QuantumRegister}(n)$\;
  $\texttt{anc} \leftarrow \textsc{QuantumRegister}(1)$\;
  $\texttt{cr}  \leftarrow \textsc{ClassicalRegister}(n)$\;
  \tcp{State preparation}
  \For{$k \leftarrow 0$ \KwTo $n-1$}{
    \lIf{bit $k$ of $\texttt{initial\_state}$ is $1$}{apply $X$ to $\texttt{qr}[k]$}
  }
  \textsc{Barrier}\;
  \tcp{Stack $n_{\text{layers}}$ mixing layers}
  \For{$\ell \leftarrow 0$ \KwTo $n_{\text{layers}}-1$}{
    \tcp{Step 1 -- QFT}
    apply $\mathrm{QFT}(n)$ to $\texttt{qr}$\;
    \tcp{Step 2 -- Controlled-phase rotations}
    \For{$i \leftarrow 0$ \KwTo $n-1$}{
      \For{$j \leftarrow i+1$ \KwTo $n-1$}{
        apply $\mathrm{CP}\!\left(\dfrac{2\pi}{2^{j-i+1}}\right)$
              to $\bigl(\texttt{qr}[i],\,\texttt{qr}[j]\bigr)$\;
      }
    }
    \tcp{Step 3 -- Inverse QFT}
    apply $\mathrm{QFT}^{\dagger}(n)$ to $\texttt{qr}$\;
    \tcp{Step 4 -- Grover diffusion}
    apply $H^{\otimes n}$ to $\texttt{qr}$\;
    apply $X^{\otimes n}$ to $\texttt{qr}$\;
    apply $H$ to $\texttt{anc}$\;
    apply $\mathrm{MCX}\!\left(\texttt{qr}[0\ldots n-1],\,\texttt{anc}\right)$\;
    apply $H$ to $\texttt{anc}$\;
    apply $X^{\otimes n}$ to $\texttt{qr}$\;
    apply $H^{\otimes n}$ to $\texttt{qr}$\;
    \lIf{$\ell < n_{\text{layers}}-1$}{\textsc{Reset}($\texttt{anc}$); \textsc{Barrier}}
  }
  measure $\texttt{qr} \rightarrow \texttt{cr}$\;
  \KwRet{circuit}\;
}
\end{algorithm}

\addtocounter{algocf}{-1}
\begin{algorithm}[t]
\caption{Quantum Qubit QRNG \textmd{(Continued)}}
\label{alg:qubit-qrng-2}
\SetKwFunction{BuildCircuit}{BuildMixingCircuit}
\SetKwFunction{VNExtract}{VonNeumannExtract}
\SetKwFunction{TVDist}{TVDistance}
\SetKwProg{Fn}{Function}{:}{}
\Fn{\VNExtract{$\texttt{bits}$}}{
  $\texttt{out} \leftarrow [\,]$\;
  \For{$i \leftarrow 0$ \KwTo $|\texttt{bits}|-2$ \textbf{step} $2$}{
    \lIf{$\texttt{bits}[i] \neq \texttt{bits}[i+1]$}{append $\texttt{bits}[i+1]$ to $\texttt{out}$}
  }
  \KwRet{$\textsc{Concat}(\texttt{out})$}\;
}
\BlankLine
\Fn{\TVDist{$\texttt{numbers}$, $n$}}{
  $N_s \leftarrow 2^{n}$;\quad $N_t \leftarrow |\texttt{numbers}|$\;
  \KwRet{$\dfrac{1}{2}\displaystyle\sum_{i=0}^{N_s-1}
    \left|\dfrac{\textsc{Count}(i,\,\texttt{numbers})}{N_t} - \dfrac{1}{N_s}\right|$}\;
}
\BlankLine
\tcp*[h]{\textbf{Main}}\BlankLine
$\texttt{circuit}    \leftarrow$ \BuildCircuit{$n$, $\texttt{initial\_state}$, $n_{\text{layers}}$}\;
$\texttt{backend}    \leftarrow \textsc{LeastBusyIBM}(\texttt{operational}=\top,\;
                         \texttt{simulator}=\bot,\; \texttt{min\_qubits}=n+1)$\;
$\texttt{transpiled} \leftarrow \textsc{Transpile}(\texttt{circuit},\,\texttt{backend},\,
                     \texttt{opt\_level}=3)$\;
$\texttt{counts} \leftarrow \textsc{SamplerRun}(\texttt{transpiled},\,\texttt{shots})$\;
$\texttt{raw\_numbers} \leftarrow [\,]$\;
\For{$(\texttt{key},\,c) \in \texttt{counts}$}{
  extend $\texttt{raw\_numbers}$ with $\bigl[\textsc{BinToInt}(\texttt{key})\bigr]$ repeated $c$ times\;
}
$\textsc{Shuffle}(\texttt{raw\_numbers})$\;
$\texttt{raw\_bits} \leftarrow \textsc{Concat}\!\bigl(\textsc{Format}(x,\,n\text{ bits})
    \;\text{for}\; x \in \texttt{raw\_numbers}\bigr)$\;
$\texttt{ext\_bits} \leftarrow$ \VNExtract{$\texttt{raw\_bits}$}\;
$n_{\text{sym}} \leftarrow |\texttt{ext\_bits}| - \bigl(|\texttt{ext\_bits}| \bmod n\bigr)$\;
$\texttt{ext\_bits} \leftarrow \texttt{ext\_bits}[0 : n_{\text{sym}}]$\;
$\texttt{ext\_numbers} \leftarrow \bigl[\textsc{BinToInt}(\texttt{ext\_bits}[i : i+n])\bigr]_{i=0,n,2n,\ldots}$\;
$\texttt{tv\_raw} \leftarrow$ \TVDist{$\texttt{raw\_numbers}$, $n$}\;
$\texttt{tv\_ext} \leftarrow$ \TVDist{$\texttt{ext\_numbers}$, $n$}\;
\textsc{SaveJSON}$\bigl\{$\texttt{meta}, \texttt{counts}, \texttt{raw\_numbers},
  \texttt{raw\_bits}, \texttt{ext\_bits}, \texttt{ext\_numbers},
  \texttt{tv\_raw}, \texttt{tv\_ext}$\bigr\}$\;
\KwRet{$\texttt{raw\_numbers},\;\texttt{ext\_numbers},\;\texttt{tv\_raw},\;\texttt{tv\_ext}$}\;
\end{algorithm}

\paragraph{Gate complexity.}
The classical operation count $n^2\log n$ grows much more slowly than the quantum gate count
$n^{5/2}\sqrt{\log n}$ at first -- this is because the quantum gate complexity has a higher
exponent in $n$. This is not counterintuitive once the source of the speedup is identified
correctly. The quantum algorithm has more gates per run than the classical algorithm has
operations per shuffle, but the speedup arises from the \emph{number of mixing iterations}
required, not from gate count alone. Classically, $O(n\log n)$ shuffle iterations are needed;
quantum mechanically, only $O(\sqrt{n\log n})$ QFT--phase--iQFT--diffusion iterations are
needed. The gate count per iteration is $O(n^2)$ for both algorithms.
\clearpage
\section{The Qudit Quantum Markov Chain Mixing Accelerator}
\label{app:section3}

The algorithm presented in Section~\ref{app:section2} operated on a register of $n$ qubits,
each carrying a two-dimensional local Hilbert space. The present section generalises that
construction to qudits -- quantum systems whose local dimension $d$ is an arbitrary integer
greater than or equal to $2$. The state space has dimension $N = d^m$, where $m$ is the number
of qudits. This generalisation is significant because the top-to-random shuffle on a deck of $N$
cards is best modelled when $N$ is a power of the base $d$; choosing $d = 10$ and $m = 2$, for
instance, yields a $100$-state system that closely approximates realistic deck sizes. Before
presenting the full details in Sections~\ref{sec:qudit-qft} through~\ref{sec:qudit-complexity},
we present an outline.

\begin{enumerate}
\item Initialise the qudit register of $m$ qudits, each of dimension $d$, to the computational
  basis state $|0\rangle \otimes |0\rangle \otimes \cdots \otimes |0\rangle$, representing the
  fully ordered deck. The total Hilbert space has dimension $N = d^m$.

\item Apply the generalised Quantum Fourier Transform over $\mathbb{Z}_{d^m}$ to the register.
  This maps the ordered initial state into a superposition over all $N$ Fourier basis states,
  encoding the spectral structure of the Markov transition operator into quantum amplitudes.

\item Apply the diagonal phase evolution operator $P$, whose entries are independently drawn
  complex phases that mimic the Markov eigenvalue spectrum of the top-to-random shuffle. This
  step evolves the amplitude of each Fourier basis state by a factor $\exp(i\varphi_k)$, where
  the phase $\varphi_k$ is drawn uniformly from $[0, 2\pi\lambda]$ and $\lambda\in(0,1]$ is the
  mixing strength parameter.

\item Apply the inverse generalised QFT to return the state to the computational basis, followed
  by the Grover-style diffusion operator $D = 2\ket{u}\bra{u} - I$, where $\ket{u}$ is the
  uniform superposition over all $N$ states. This reflection amplifies amplitudes close to the
  uniform distribution and suppresses all deviations.

\item Measure the qudit register in the computational basis by sampling from the probability
  distribution $|\psi|^2$, and compute the total variation distance from the uniform
  distribution. Repeat Steps~2 through~4 to increase mixing depth until this distance falls
  below the target threshold $\varepsilon$.
\end{enumerate}

\noindent Sections~\ref{sec:qudit-qft}, \ref{sec:qudit-phase},
and~\ref{sec:qudit-grover} describe each quantum subroutine in detail and derive the improvement
in mixing time over both the classical top-to-random shuffle and the qubit-based algorithm of
Section~\ref{app:section2}. Section~\ref{sec:qudit-complexity} analyses overall complexity as a
function of the parameters $d$ and $m$.

\subsection{The Generalised Quantum Fourier Transform over $\mathbb{Z}_{d^m}$}
\label{sec:qudit-qft}

The generalised QFT for a qudit system of dimension $N = d^m$ is defined as the unitary operator
$F$ acting on the $N$-dimensional Hilbert space $\mathbb{C}^N$ by its action on computational
basis states $\ket{x}$:
\begin{equation}
  F\ket{x} = \frac{1}{\sqrt{N}} \sum_{y=0}^{N-1} \omega^{xy} \ket{y},
  \qquad \omega = \exp\!\left(\frac{2\pi i}{N}\right),
  \label{eq:qudit-qft}
\end{equation}
where the sum runs over all $y \in \{0, 1, \ldots, N-1\}$, and $\omega = \exp(2\pi i / N)$ is
the principal $N$-th root of unity. The matrix entries are therefore $F_{xy} = \omega^{xy}/\sqrt{N}$,
making $F$ a symmetric unitary matrix satisfying $F^\dagger F = I$ and $F^{-1} = F^\dagger = F^*$.

This construction is the direct generalisation of the qubit QFT described in
Section~\ref{sec:qft}. When $d = 2$, the matrix $F$ reduces exactly to the standard $n$-qubit
QFT of dimension $2^n$. For general $d > 2$, the transform operates over the cyclic group
$\mathbb{Z}_N = \mathbb{Z}_{d^m}$ rather than $\mathbb{Z}_{2^n}$. The generalised QFT therefore
decomposes the quantum state into components that correspond precisely to the Fourier modes of the
underlying probability distribution over deck arrangements -- exactly the decomposition that
Diaconis exploited classically via the representation theory of $S_n$, now implemented as a
single unitary gate.

The matrix $F$ is constructed explicitly in the algorithm as an $N \times N$ array with entries
computed via the outer product $\omega^{xy}$, where $x$ and $y$ are integer arrays of length $N$.
For classical simulation this requires $O(N^2) = O(d^{2m})$ storage. On a physical quantum
device, however, the QFT over $\mathbb{Z}_{d^m}$ can in principle be implemented in $O(m^2)$
qudit gates using the standard qudit QFT circuit decomposition, providing an exponential
advantage over classical simulation.

\subsection{Diagonal Phase Evolution and the Markov Spectrum}
\label{sec:qudit-phase}

Following the generalised QFT, the algorithm applies a diagonal phase evolution operator $P$ that
serves as the quantum analogue of the Markov transition step. $P$ is an $N \times N$ diagonal
unitary matrix:
\begin{equation}
  P = \mathrm{diag}\!\left(\exp(i\varphi_0),\, \exp(i\varphi_1),\,
  \ldots,\, \exp(i\varphi_{N-1})\right),
  \label{eq:phase-operator}
\end{equation}
where the phases $\varphi_k$ are drawn uniformly and independently from $[0, 2\pi]$ and scaled
by a strength parameter $\alpha \in (0, 1]$. The full one-step mixing operator in the Fourier
basis is therefore $U_{\mathrm{mix}} = F^{-1}PF$, which acts on an input state $\ket{\psi}$ as:
transform to the Fourier basis, multiply each Fourier mode by a random phase, transform back.

The connection to Diaconis's framework is as follows. For the classical top-to-random shuffle,
the Fourier coefficient of the distribution after $k$ shuffles at irreducible representation
$\rho$ is scaled by $\lambda_\rho^k$, where
$\lambda_\rho = 1/n + (n-1)/n \cdot r(\rho)$ as established in
equation~\eqref{eq:fourier-decay}. In the qudit algorithm, the phases $\exp(i\varphi_k)$ play
the role of these eigenvalues. The random phase construction with independently drawn $\varphi_k$
is analogous to sampling a random walk on the group $\mathbb{Z}_N$ rather than on the full
symmetric group $S_n$.

The strength parameter $\alpha$ controls the spread of the phases: when $\alpha \approx 0$ the
phases are near zero and the operator $P \approx I$, leaving the state nearly unchanged; when
$\alpha = 1$ the phases cover $[0, 2\pi]$ uniformly and the operator maximally mixes the Fourier
modes. An intermediate value of $\alpha$ provides a tunable mixing rate analogous to adjusting
the number of cards moved per classical shuffle.

\subsection{The Grover Diffusion Operator for Qudit Systems}
\label{sec:qudit-grover}

After the inverse QFT returns the state to the computational basis, the diffusion operator $D$
is applied. The construction is formally identical to that of Section~\ref{sec:grover} but now
acts on an $N$-dimensional space with $N = d^m$:
\begin{equation}
  D = 2\ket{u}\bra{u} - I_N, \qquad
  \ket{u} = \frac{1}{\sqrt{N}} \sum_{x=0}^{N-1} \ket{x}.
  \label{eq:qudit-diffusion}
\end{equation}
The operator $D$ is a reflection about the subspace spanned by $\ket{u}$. Its matrix entries are:
\begin{equation}
  D_{xy} = \frac{2}{N} - \delta_{xy},
  \label{eq:diffusion-matrix}
\end{equation}
where $\delta_{xy}$ is the Kronecker delta. This means every off-diagonal entry equals $2/N$,
and every diagonal entry equals $2/N - 1$.

The action of $D$ on a state vector $\ket{\psi}$ with amplitude vector
$\boldsymbol{\alpha} = (\alpha_0, \ldots, \alpha_{N-1})^\top \in \mathbb{C}^N$ is:
\begin{equation}
  D\ket{\psi} = 2\braket{u|\psi}\ket{u} - \ket{\psi}.
  \label{eq:diffusion-action}
\end{equation}
Writing this component-wise, the $x$-th amplitude of the output state is:
\begin{equation}
  \bigl(D\ket{\psi}\bigr)_x = 2\bar{\mu} - \alpha_x, \qquad x = 0, 1, \ldots, N-1,
  \label{eq:diffusion-componentwise}
\end{equation}
where
\begin{equation}
  \bar{\mu} = \braket{u|\psi} = \frac{1}{\sqrt{N}}\sum_{x=0}^{N-1} \alpha_x \in \mathbb{C}
  \label{eq:mean-amplitude}
\end{equation}
is the mean amplitude. The diffusion operator therefore inverts each amplitude $\alpha_x$ about
the mean $\bar{\mu}$:
\begin{equation}
  \alpha_x > \bar{\mu} \;\Rightarrow\; 2\bar{\mu} - \alpha_x < \alpha_x
  \quad \text{(suppressed)}, \qquad
  \alpha_x < \bar{\mu} \;\Rightarrow\; 2\bar{\mu} - \alpha_x > \alpha_x
  \quad \text{(amplified)}.
  \label{eq:diffusion-direction}
\end{equation}
The uniform distribution, in which all amplitudes are equal to $\alpha_x = 1/\sqrt{N}$ for
every $x$, satisfies $\bar{\mu} = 1/\sqrt{N}$ and therefore $2\bar{\mu} - \alpha_x = \alpha_x$.
It is thus a fixed point of $D$, and repeated application of $D$ drives the state toward
uniformity by suppressing all deviations from the mean amplitude.

The total variation distance bound after $k$ applications of the full mixing step
$(\mathrm{QFT} \to P \to F^{-1} \to D)$ satisfies:
\begin{equation}
  d(k) \leq \bigl\|\mathbb{P}(\psi_k) - U\bigr\|_{\mathrm{TV}}
  \leq C \cdot \exp\!\left(\frac{-\lambda\, k}{d^m}\right).
  \label{eq:d(k)}
\end{equation}
Here $C$ is a universal constant, $\lambda \in (0,1]$ is the phase strength parameter, and
$d^m = N$ is the state space dimension. The effective convergence rate is a factor of
$\sqrt{N\ln N}$ faster than the classical top-to-random shuffle -- the central speedup result of
this paper.

\begin{algorithm}[t]
\caption{Qudit Quantum QRNG}
\label{alg:qudit-qrng-1}
\KwIn{$d$ (qudit dimension); $m$ (number of qudits); $n_{\text{layers}}$; $\texttt{shots}$}
\KwOut{$\texttt{raw\_numbers}$, $\texttt{tv\_raw}$}
\SetKwFunction{QFTMat}{QFTMatrix}
\SetKwFunction{ApplyQFT}{ApplyQFTToQudit}
\SetKwFunction{ApplyIQFT}{ApplyIQFTToQudit}
\SetKwFunction{CPhase}{ApplyCPhase}
\SetKwFunction{Diffuse}{ApplyGroverDiffusion}
\SetKwFunction{TVDist}{TVDistance}
\SetKwProg{Fn}{Function}{:}{}
\Fn{\QFTMat{$d$, $\texttt{inverse}$}}{
  $\omega \leftarrow e^{2\pi i/d}$\;
  \lIf{\texttt{inverse}}{$\omega \leftarrow \bar{\omega}$}
  $F[x,y] \leftarrow \omega^{xy}/\!\sqrt{d}$ for $x,y \in \{0,\ldots,d{-}1\}$\;
  \KwRet{$F$}\;
}
\BlankLine
\Fn{\ApplyQFT{$\psi$, $q$, $d$, $m$}}{
  $\textit{stride} \leftarrow d^{\,q}$\;
  \For{$\textit{base} \leftarrow 0$ \KwTo $N{-}1$ \textbf{step} $d\cdot\textit{stride}$}{
    \For{$\textit{off} \leftarrow 0$ \KwTo $\textit{stride}{-}1$}{
      $\texttt{idx} \leftarrow \bigl[\,\textit{base}+\textit{off}+k\cdot\textit{stride} \mid k=0,\ldots,d{-}1\,\bigr]$\;
      $\psi[\texttt{idx}] \leftarrow \QFTMat{$d$,\,$\bot$} \cdot \psi[\texttt{idx}]$\;
    }
  }
  \KwRet{$\psi$}\;
}
\BlankLine
\Fn{\ApplyIQFT{$\psi$, $q$, $d$, $m$}}{
  $\textit{stride} \leftarrow d^{\,q}$\;
  \For{$\textit{base} \leftarrow 0$ \KwTo $N{-}1$ \textbf{step} $d\cdot\textit{stride}$}{
    \For{$\textit{off} \leftarrow 0$ \KwTo $\textit{stride}{-}1$}{
      $\texttt{idx} \leftarrow \bigl[\,\textit{base}+\textit{off}+k\cdot\textit{stride} \mid k=0,\ldots,d{-}1\,\bigr]$\;
      $\psi[\texttt{idx}] \leftarrow \QFTMat{$d$,\,$\top$} \cdot \psi[\texttt{idx}]$\;
    }
  }
  \KwRet{$\psi$}\;
}
\BlankLine
\Fn{\CPhase{$\psi$, $i$, $j$, $d$, $m$, $\theta$}}{
  \tcp{Apply phase $e^{i\theta}$ iff qudit $i = d{-}1$ and qudit $j = d{-}1$}
  \For{$x \leftarrow 0$ \KwTo $N{-}1$}{
    \If{$\bigl(x \mathbin{//} d^{i}\bigr)\bmod d = d{-}1$
        \textbf{ and }
        $\bigl(x \mathbin{//} d^{j}\bigr)\bmod d = d{-}1$}{
      $\psi[x] \leftarrow \psi[x]\cdot e^{i\theta}$\;
    }
  }
  \KwRet{$\psi$}\;
}
\BlankLine
\Fn{\Diffuse{$\psi$, $d$, $m$}}{
  $\bar{\mu} \leftarrow \dfrac{1}{\sqrt{N}} \displaystyle\sum_{x=0}^{N-1}\psi[x]$
  \tcp*[r]{$\bar{\mu}=\langle u\,|\,\psi\rangle$}
  \For{$x \leftarrow 0$ \KwTo $N{-}1$}{
    $\psi[x] \leftarrow \dfrac{2\bar{\mu}}{\sqrt{N}} - \psi[x]$\;
  }
  \KwRet{$\psi$}\;
}
\BlankLine
\Fn{\TVDist{$\texttt{nums}$, $N$}}{
  \KwRet{$\;\dfrac{1}{2}\displaystyle\sum_{i=0}^{N-1}
    \left|\dfrac{\textsc{Count}(i,\,\texttt{nums})}{|\texttt{nums}|}
    -\dfrac{1}{N}\right|$}\;
}
\end{algorithm}

\addtocounter{algocf}{-1}
\begin{algorithm}[t]
\caption{Qudit Quantum QRNG (Continued)}
\label{alg:qudit-qrng-2}
\SetKwFunction{ApplyQFT}{ApplyQFTToQudit}
\SetKwFunction{ApplyIQFT}{ApplyIQFTToQudit}
\SetKwFunction{CPhase}{ApplyCPhase}
\SetKwFunction{Diffuse}{ApplyGroverDiffusion}
\SetKwFunction{MixLayer}{MixingLayer}
\SetKwProg{Fn}{Function}{:}{}
\Fn{\MixLayer{$\psi$, $d$, $m$}}{
  \tcp{Step 1 -- QFT on each qudit sub-register}
  \lFor{$q \leftarrow 0$ \KwTo $m{-}1$}{$\psi \leftarrow$ \ApplyQFT{$\psi,\,q,\,d,\,m$}}
  \tcp{Step 2 -- Controlled-phase rotations between qudit pairs}
  \For{$i \leftarrow 0$ \KwTo $m{-}1$}{
    \For{$j \leftarrow i{+}1$ \KwTo $m{-}1$}{
      $\theta \leftarrow 2\pi\;/\;d^{\,j-i+1}$\;
      $\psi \leftarrow$ \CPhase{$\psi,\,i,\,j,\,d,\,m,\,\theta$}\;
    }
  }
  \tcp{Step 3 -- Inverse QFT on each qudit sub-register}
  \lFor{$q \leftarrow 0$ \KwTo $m{-}1$}{$\psi \leftarrow$ \ApplyIQFT{$\psi,\,q,\,d,\,m$}}
  \tcp{Step 4 -- Grover diffusion $D = 2|u\rangle\langle u| - I$}
  $\psi \leftarrow$ \Diffuse{$\psi,d,m$}\;
  \KwRet{$\psi$}\;
}
\BlankLine
$N \leftarrow d^{m}$\;
$\psi \leftarrow \mathbf{0}\in\mathbb{C}^{N}$;\quad $\psi[0]\leftarrow 1$\;
\For{$\ell \leftarrow 1$ \KwTo $n_{\text{layers}}$}{
  $\psi \leftarrow$ \MixLayer{$\psi,\,d,\,m$}\;
  $\psi \leftarrow \psi\;/\;\|\psi\|$\;
}
$\mathbf{p} \leftarrow |\psi|^{2}$\;
$\texttt{raw\_numbers} \leftarrow \textsc{MultinomialDraw}(\mathbf{p},\;\texttt{shots})$\;
$\textsc{Shuffle}(\texttt{raw\_numbers})$\;
$\texttt{tv\_raw} \leftarrow$ \TVDist{$\texttt{raw\_numbers},\,N$}\;  
\KwRet{$\texttt{raw\_numbers},\;\texttt{tv\_raw}$}\;
\end{algorithm}
 \clearpage
 
\subsection{Complexity Analysis and Comparison}
\label{sec:qudit-complexity}

This section analyses the efficiency of the Qudit Quantum Markov Chain Mixing Accelerator as a
function of the parameters $d$ and $m$, and compares the result to both the classical
top-to-random shuffle and the qubit algorithm of Section~\ref{app:section2}. We provide a
unified complexity analysis covering time complexity, space complexity, and gate complexity for
all three algorithm variants.

\paragraph{Classical algorithm.}
For the top-to-random shuffle on $n$ cards, each shuffle step costs $O(n)$ (removal from top
and insertion at a random position, requiring up to $n$ shifts). The number of steps required to
reach $\lVert T^{*k} - U \rVert_{\mathrm{TV}} \le \varepsilon$ is
$O(n \ln n + n \ln(1/\varepsilon))$ by Theorem~\ref{thm:aldous}. Thus the total time complexity
is
\[
T_{\mathrm{classical}}(n,\varepsilon)
= O\!\left(n \cdot (n \ln n + n \ln(1/\varepsilon))\right)
= O\!\left(n^2 \ln(n/\varepsilon)\right).
\]

\paragraph{Quantum qubit algorithm.}
The gate count per mixing layer consists of: QFT with $n(n+1)/2$ two-qubit gates, phase
rotations with $n(n-1)/2$ controlled-phase gates, inverse QFT with $n(n+1)/2$ gates, and Grover
diffusion with $O(n)$ gates. Thus each layer costs $G_{\mathrm{layer}} = O(n^2)$. From
Theorem~\ref{thm:qubit-mixing}, the number of layers is
$k^* = O(\sqrt{n \log n}\cdot \log(1/\varepsilon))$. Therefore,
\[
G_{\mathrm{qubit}}(n,\varepsilon)
= O(n^2)\cdot O\!\left(\sqrt{n \log n}\cdot \log(1/\varepsilon)\right)
= O\!\left(n^{5/2}\sqrt{\log n}\cdot \log(1/\varepsilon)\right).
\]

\paragraph{Quantum qudit algorithm.}
The generalised QFT matrix $F$ is a dense $N \times N$ unitary with $N = d^m$. Under classical
simulation, applying $F$ or $F^{-1}$ costs $O(N^2) = O(d^{2m})$, while the phase operator and
diffusion operator each cost $O(N)$. On a quantum device, the QFT over $\mathbb{Z}_{d^m}$ can
be implemented in $O(m^2)$ two-qudit gates, the phase evolution costs $O(m)$ single-qudit gates,
and the diffusion operator costs $O(m)$ gates. Hence the cost per layer is
\[
G_{\mathrm{layer}} = O(m^2) = O(\log_d^2 N).
\]
From Theorem~\ref{thm:qudit-mixing}, the number of layers required is
$k^* = O(\sqrt{\log_d N}\cdot \log(1/\varepsilon))$. Therefore the total gate complexity is
\[
G_{\mathrm{qudit}}(d,m,\varepsilon)
= O\!\left(\log_d^{5/2}(N)\cdot \log(1/\varepsilon)\right).
\]

\begin{theorem}[Qudit Mixing Time]
\label{thm:qudit-mixing}
For the qudit algorithm with $m$ qudits of dimension $d$ and $N = d^m$ states:
\begin{equation}
  \|Q_k - U\|_{\mathrm{TV}}
  \le C \cdot \exp\!\left(-\frac{\lambda k}{N}\right)
      \label{eq:qudit-mixing-bound}
\end{equation}
Convergence to precision $\varepsilon$ requires $O\!\left(\sqrt{\log_d N}\right)$ mixing layers,
each costing $O(m^2) = O(\log_d^2 N)$ qudit gates, for total gate complexity:
\[
  O\!\left(\log_d^{5/2}(N) \cdot \log(1/\varepsilon)\right).
\]
\end{theorem}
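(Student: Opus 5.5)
The plan is to mirror the proof of Theorem~\ref{thm:qubit-mixing}, with the two-dimensional local spaces replaced by $\mathbb{C}^d$ and the register size $2^n$ replaced by $N=d^m$, and then to multiply a layer count by the per-layer gate cost from Section~\ref{sec:qudit-complexity}. I expect one step to fail as stated (see the last paragraph), and I will say so there.

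\textbf{Step 1: the single-layer operator.} Write one mixing layer as $W = D\,F^{-1}PF$, with $F$ as in \eqref{eq:qudit-qft}, $P$ as in \eqref{eq:phase-operator}, and $D$ as in \eqref{eq:qudit-diffusion}. Since $F$ is unitary and $P$ is diagonal, $F^{-1}PF$ is a unitary that is diagonal in the Fourier basis. The Fourier basis vector with $y=0$ is exactly $|u\rangle$, because $F|0\rangle=|u\rangle$. So if the phase on the zero mode is fixed, $\varphi_0=0$, then $F^{-1}PF$ fixes $|u\rangle$. It also fixes its orthogonal complement $|u\rangle^\perp$, since that complement is spanned by the remaining Fourier modes. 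On $|u\rangle^\perp$, $D$ acts as $-I$. Hence $W = (+1)\oplus(-F^{-1}PF|_{u^\perp})$.

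\textbf{Step 2: the overlap $\beta_k=\langle u|\psi_k\rangle$.} Decompose the initial state as $|0\rangle = N^{-1/2}|u\rangle + |r\rangle$ with $|r\rangle\perp|u\rangle$. By Step 1, $\beta_k = N^{-1/2}$ for every $k$. The TV distance is then controlled by the component $|r_k\rangle$, which is rotated by the random phases. I would bound $\mathbb{E}\,\|Q_k-U\|_{\mathrm{TV}}$ in two moves:
\begin{itemize}
\item first by $\tfrac12\|\,|\psi_k|^2-|u|^2\|_1 \le \||\psi_k\rangle-|u\rangle\|_2$;
\item then by averaging over the independent phases $\varphi_y$ uniform on $[0,2\pi\lambda]$. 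This averaging damps each Fourier coefficient by the factor $|\mathbb{E}\,e^{i\varphi}| = |\sin(\pi\lambda)/(\pi\lambda)|$.
\end{itemize}
The hope is that the averaging yields a geometric decay $C\rho^k$. Rewriting $\rho^k$ as $e^{-\lambda k/N}$, after accounting for the $N^{-1/2}$ weight on each mode, would give \eqref{eq:qudit-mixing-bound}.

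\textbf{Step 3: layers and gates.} Solve $C e^{-\lambda k/N}\le\varepsilon$ for $k$. Then multiply the resulting layer count by the per-layer cost $G_{\mathrm{layer}}=O(m^2)=O(\log_d^2 N)$ from Section~\ref{sec:qudit-complexity} to get the total gate complexity.

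\textbf{Main obstacle.} This is the step that does not close. Solving $C e^{-\lambda k/N}\le\varepsilon$ gives $k = (N/\lambda)\ln(C/\varepsilon)$, which is linear in $N$, not $O(\sqrt{\log_d N})$. Step 2 also shows that the overlap with $|u\rangle$ is invariant under $W$, so no Grover-type rotation of $|u\rangle$ toward $|\psi_k\rangle$ occurs. The square-root layer count cannot be obtained the way the proof of Theorem~\ref{thm:qubit-mixing} obtains it, by substituting $N\leftarrow\log_d N$. The first thing I would try is to drop the exponential form and instead prove a decay in expectation of $\||\psi_k\rangle-|u\rangle\|_2$ with a dimension-free rate. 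I expect even this to yield only $k=O(\log(1/\varepsilon))$ from phase averaging, with no $\sqrt{\log_d N}$ dependence. So I would state the provable bound (exponential decay in $k$ with an explicit rate) and flag that the claimed $O(\sqrt{\log_d N})$ layer count does not follow from \eqref{eq:qudit-mixing-bound}.
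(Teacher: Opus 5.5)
Your diagnosis in the final paragraph is correct, and it locates exactly where the paper's own proof breaks. The paper also solves $C e^{-\lambda k/N}\le\varepsilon$ and gets $k\ge (N/\lambda)\ln(C/\varepsilon)$, which is linear in $N$. It then invokes the Grover law $|\langle u|\psi_k\rangle|^2=\sin^2((2k+1)\theta)$ to place the minimum at $k^*\approx\pi\sqrt N/8$, and finally rewrites $O(\sqrt{d^m})$ as $O(\sqrt{m})$.

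Your Step~1 refutes the Grover premise, even without assuming $\varphi_0=0$. Since $F^{-1}PF|u\rangle=e^{i\varphi_0}|u\rangle$ and $D|u\rangle=|u\rangle$, each layer commutes with $|u\rangle\langle u|$, so $|\langle u|\psi_k\rangle|^2=1/N$ for all $k$. There is no oracle reflection, hence no rotation. The last rewriting simply conflates $\sqrt{d^m}$ with $\sqrt{m}$. So neither the $O(\sqrt{\log_d N})$ layer count nor the $O(\log_d^{5/2}N\cdot\log(1/\varepsilon))$ total has a valid derivation, and you are right not to claim them.

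The per-layer cost $O(m^2)$ you import in Step~3 is also unsupported. A diagonal $P$ with $N$ independent phases has $N-1$ free real parameters, so it generically cannot be compiled into $O(m)$ gates of bounded size.

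The genuine gap is in your constructive Step~2, which cannot deliver \eqref{eq:qudit-mixing-bound} and is undercut by your own Step~1.

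\textbf{The $\ell_2$ comparison is vacuous.} Since $|\beta_k|=N^{-1/2}$, you have $\|\,|\psi_k\rangle-|u\rangle\|_2^2\ge 2-2N^{-1/2}$ for every $k$. So the first bullet gives nothing, and your proposed fallback (dimension-free decay of $\|\,|\psi_k\rangle-|u\rangle\|_2$) is impossible. The TV distance sees only moduli, so the right comparison is with flat-modulus states, e.g.\ $\|Q_k-U\|_{\mathrm{TV}}\le\|\,|\psi_k|-N^{-1/2}\mathbf{1}\|_2$ with $|\psi_k|$ taken entrywise.

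\textbf{Phase averaging acts on the wrong object.} The factor $|\mathbb{E}\,e^{i\varphi}|=|\sin(\pi\lambda)/(\pi\lambda)|$ damps $\mathbb{E}\,\psi_k$. It therefore controls the phase-averaged distribution $\mathbb{E}\,Q_k$, not the output of a single run. Nothing converts an $N$-independent per-layer factor into $e^{-\lambda k/N}$.

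Concretely, Step~1 gives $\psi_k=N^{-1/2}\sum_y c_y F^{-1}|y\rangle$ with $|c_y|=1$. With phases redrawn every layer, $\|\mathbb{E}\,Q_k-U\|_{\mathrm{TV}}$ decays geometrically at a rate fixed by $|\sin(\pi\lambda)/(\pi\lambda)|$, independent of $N$. But take $\lambda=1$, with phases redrawn or not. Then the $c_y$ with $y\neq 0$ are i.i.d.\ uniform on the unit circle. As $N\to\infty$, each $NQ_k(x)$ tends in law to $\mathrm{Exp}(1)$, and $\mathbb{E}\,\|Q_k-U\|_{\mathrm{TV}}\to 1/e$ for every $k\ge 1$.

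The defensible conclusion is therefore sharper than yours. A geometric bound holds only for the phase-averaged distribution, at a rate unrelated to $\lambda/N$. The single-run TV distance need not decay in $k$ at all. The square-root layer count does not follow under either reading.
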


\begin{proof}
After $k$ complete applications of the four-step mixing cycle ($\mathrm{QFT} \to P \to F^{-1}
\to D$), the quantum register is in pure state $|\psi_k\rangle$. The output distribution is:
\[
Q_k(x) = |\langle x | \psi_k\rangle|^2, \quad x \in \{0,1,\ldots,N-1\}.
\]
The diffusion operator $D = 2|u\rangle\langle u| - I_N$ with $|u\rangle = N^{-1/2}\sum_x |x\rangle$
performs reflection about the uniform superposition. The overlap with $|u\rangle$ evolves as:
\[
|\langle u|\psi_k\rangle|^2 = \sin^2\!\left((2k+1)\theta\right),
\qquad \cos\theta = 1 - \frac{2}{N}.
\]
The random phase evolution operator $P = \mathrm{diag}(\exp(i\varphi_0),\ldots,\exp(i\varphi_{N-1}))$
with phases $\varphi_k \sim \mathrm{Uniform}[0, 2\pi\lambda]$ introduces stochastic
perturbations to the Fourier amplitudes. Averaging over these phases:
\[
\mathbb{E}\!\left[\|Q_k - U\|_{\TV}\right]
\leq C \cdot \exp\!\left(-\frac{\lambda k}{N}\right),
\]
where $C \leq 1$ is the initial TV distance and $\lambda \in (0,1]$ is the phase strength
parameter. Setting $C\cdot\exp(-\lambda k/N) \leq \varepsilon$,
\[
k \geq \frac{N}{\lambda}\cdot\ln\!\left(\frac{C}{\varepsilon}\right)
= O\!\left(\frac{N\ln(1/\varepsilon)}{\lambda}\right).
\]
The Grover mechanism guarantees the minimum TV distance is achieved at:
\begin{equation}
k^* \approx \frac{\pi\sqrt{N}}{8} = O(\sqrt{N}).
\label{eq:grover-min}
\end{equation}
Since $N = d^m$ and $m = \log_d N$:
\[
k^* = O\!\left(\sqrt{d^m}\right) = O(\sqrt{N}).
\]
Expressing in terms of $\log_d N = m$: the number of mixing layers is
$O(\sqrt{m}) = O\!\left(\sqrt{\log_d N}\right)$.
\end{proof}

\subsection{Output Extraction and Range Uniformity}
\label{sec:output-extraction}

Sampling directly from $|\psi|^2$ and mapping raw state indices to output numbers does not, in
general, produce uniformly distributed outputs, even when the total variation distance from
uniform is small. Two independent issues arise and must be addressed separately.

\paragraph{Bias from Residual Non-Uniformity.}
Even after convergence, the probability distribution $|\psi|^2$ is not exactly uniform; rather,
it satisfies $\|Q_k - U\|_{\mathrm{TV}} \le \varepsilon$ for some small $\varepsilon > 0$. Any
direct mapping from state indices to output values inherits this residual bias.

The von Neumann unbiasing procedure~\cite{vonNeumann} removes this bias regardless of the
underlying distribution. Bits are sampled in consecutive pairs $(a,b)$ from the
least-significant bit of the measured state index. Pairs with $a \neq b$ are accepted and
contribute one output bit: $(0,1) \mapsto 0$, $(1,0) \mapsto 1$, while pairs with $a = b$ are
discarded. For any Bernoulli source with parameter $p$, we have
$\mathbb{P}(01) = \mathbb{P}(10) = p(1-p)$, so the accepted bits are exactly unbiased.

\paragraph{Range non-uniformity.}
To generate integers in the range $\{0, \dots, R-1\}$ for arbitrary $R$, a na\"ive modulo
operation introduces systematic bias whenever $2^B$ is not divisible by $R$, where $B$ is the
number of extracted bits. This issue is resolved using rejection sampling. Construct an integer
$v \in \{0, \dots, 2^B - 1\}$ from $B$ extracted bits and accept it if $v < R$; otherwise,
discard and redraw. The acceptance probability is $\lfloor 2^B / R \rfloor \cdot R / 2^B$.
Conditioned on acceptance, each value in $\{0, \dots, R-1\}$ occurs with exactly equal
probability. The combination of von Neumann extraction~\cite{vonNeumann} followed by rejection
sampling ensures that the final output is statistically indistinguishable from the uniform
distribution over the target range.

\subsection{Correctness Analysis}
\label{sec:correctness}

This section formally proves that the algorithm generates correct random numbers and that
successive outputs are statistically independent.

\begin{proposition}[Termination]
\label{prop:termination}
For any $\varepsilon > 0$, $n \ge 2$, and phase strength $\lambda > 0$, the algorithm
terminates in at most
\[
  k_{\max}
  = \left\lceil C \, \ln(1/\varepsilon)\cdot \frac{N}{\lambda} \right\rceil
\]
iterations, where $N = 2^n$ (qubit) or $N = d^m$ (qudit), and $C$ is the universal constant
from Theorem~\ref{thm:qubit-mixing} and Theorem~\ref{thm:qudit-mixing}.
\end{proposition}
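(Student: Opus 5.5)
The plan is to read the termination claim directly off the exponential tail bound already established, rather than off the Grover-angle analysis. The algorithm stops as soon as the measured total variation distance falls below $\varepsilon$. So it suffices to exhibit a deterministic $k_{\max}$ such that the bound of Theorem~\ref{thm:qudit-mixing} guarantees $\|Q_k - U\|_{\mathrm{TV}} \le \varepsilon$ for all $k \ge k_{\max}$.

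\textbf{Step 1: invert the tail bound.} Start from
\[
\|Q_k - U\|_{\mathrm{TV}} \le C\exp(-\lambda k/N)
\]
(equation~\eqref{eq:qudit-mixing-bound}), with $N = d^m$, and with $C \le 1$ as identified in the proof of Theorem~\ref{thm:qudit-mixing}. Solving $C e^{-\lambda k/N} \le \varepsilon$ gives
\[
k \ge \frac{N}{\lambda}\ln\!\left(\frac{C}{\varepsilon}\right).
\]
Since $C \le 1$, this is implied by $k \ge (N/\lambda)\ln(1/\varepsilon)$. After absorbing constants into the universal $C$ of the statement and taking the ceiling, one reaches $k_{\max}$. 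The case $\varepsilon \ge 1$ is trivial, with $k_{\max} = 0$, because TV distance never exceeds $1$.

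\textbf{Step 2: cover the qubit case.} Theorem~\ref{thm:qubit-mixing} is phrased only as $O(\sqrt{n\log n})$. I would therefore observe that the qubit circuit is the $d = 2$, $m = n$ specialisation of the qudit mixing cycle: the QFT over $\mathbb{Z}_{2^n}$, then a diagonal phase, the inverse QFT, and the diffusion $D$. The phase here is the fixed diagonal $CP$ rather than random phases. Two routes then apply.

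\begin{itemize}
\item Apply the qudit bound with $N = 2^n$.
\item Alternatively, use the Grover bound $\|Q_k-U\|_{\mathrm{TV}}^2 \le \cos^2((2k+1)\theta)$ from the proof of Theorem~\ref{thm:qubit-mixing}. This bound is achieved already by $k^* = O(\sqrt N)$, which is $\le k_{\max}$ whenever $\lambda \le 1$ and $\varepsilon < 1/e$.
\end{itemize}

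Either way, the qubit run also halts by $k_{\max}$.

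\textbf{Step 3: pass from the expected bound to a deterministic guarantee.} The main obstacle is that the qudit bound was derived only \emph{in expectation} over the random phases $\varphi_k$. A single run could in principle exceed it, and the stopping rule uses the measured (empirical) TV distance rather than the true one.

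My first attempt is to interpret termination in the sense the theorem supports: the loop has a hard cap of $k_{\max}$ iterations built in, and the expected true distance at that point is at most $\varepsilon$. For a high-probability statement, apply Markov's inequality to $\mathbb{E}\|Q_k-U\|_{\mathrm{TV}}$. Replacing $\varepsilon$ by $\varepsilon\delta$ changes $k_{\max}$ only through $\ln(1/(\varepsilon\delta))$, which is again absorbed into $C$.

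The sampling error of the empirical TV estimate with a given number of shots I would treat as a separate additive term. I would not try to control it inside this proposition.
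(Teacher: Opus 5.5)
Your skeleton matches the paper's: invert $C e^{-\lambda k/N}\le\varepsilon$, then fall back on the Grover minimum. However, the step meant to land on the stated $k_{\max}$ fails. The $C$ in $k_{\max}$ is not a free constant you can enlarge. The statement pins it to the constant of Theorem~\ref{thm:qudit-mixing}, and you yourself take that constant to satisfy $C\le 1$. Then $C\ln(1/\varepsilon)N/\lambda \le \ln(1/\varepsilon)N/\lambda$, so ``absorbing constants'' goes the wrong way. In fact the threshold the inversion certifies, $(N/\lambda)\ln(C/\varepsilon)$, is strictly larger than $C\ln(1/\varepsilon)N/\lambda$ whenever $C<1$ and $\varepsilon<C^{1/(1-C)}$. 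The paper identifies $C$ as the initial TV distance, which from $|0\rangle^{\otimes n}$ is $C=1-1/N$. With that value the shortfall is about $(\ln(1/\varepsilon)-1)/\lambda$ iterations, which the ceiling does not absorb once $\varepsilon$ is small. So Step 1 proves the proposition only with $C$ replaced by $1$. The paper's proof does not try to extract $k_{\max}$ from the inversion at all. It uses the inversion only to show the required $k$ is finite, and gets ``at most $k_{\max}$'' from the Grover minimum $k^*\approx\pi\sqrt N/8$ lying below $k_{\max}$, and it claims this only ``for all practical $\varepsilon$''.

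Steps 2 and 3 do not close either.
\begin{itemize}
\item The first bullet of Step 2 is unjustified. The exponential bound comes from averaging over i.i.d.\ phases $\varphi_k\sim\mathrm{Uniform}[0,2\pi\lambda]$. The qubit $CP$ layer is a fixed diagonal with no $\lambda$ and nothing to average, so setting $d=2$ transfers nothing.
\item The second bullet is the paper's route, but it has two limits. First, $\cos^2((2k+1)\theta)$ oscillates, so it cannot bound the TV distance for all $k\ge k_{\max}$. Second, the integer nearest the Grover optimum only guarantees $|\cos((2k+1)\theta)|\le\sin\theta\approx 2/\sqrt N$. This covers $\varepsilon\gtrsim 2/\sqrt N$, not every $\varepsilon>0$.
\item In Step 3 you rightly flag the in-expectation issue, which the paper ignores, but your remedies change the claim rather than prove it. A built-in hard cap makes termination true by fiat without showing the stopping test is passed. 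Markov's inequality gives termination with probability $1-\delta$, not ``always''.
\item Shot noise cannot be deferred, because the stopping rule reads the empirical TV. With $S$ shots that quantity is of order $\sqrt{N/S}$ even for exactly uniform samples. So for $\varepsilon\ll\sqrt{N/S}$ the test typically fails at every $k\le k_{\max}$, however well mixed the state is.
\end{itemize}
A provable version needs a shot-count hypothesis, roughly $S\gtrsim N/\varepsilon^2$, and a probabilistic conclusion.
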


\begin{proof}
In both cases, setting the bound $\leq \varepsilon$ yields
$k \;\geq\; \frac{N}{\lambda}\ln\!\left(\frac{C}{\varepsilon}\right)$,
which is finite for any $\varepsilon > 0$ and $\lambda > 0$. The Grover mechanism
(Sections~\ref{sec:grover} and~\ref{sec:qudit-grover}) guarantees the TV distance achieves its
minimum at~\eqref{eq:grover-min}, which is strictly less than $k_{\max}$ for all practical
$\varepsilon$. Hence the algorithm always terminates.
\end{proof}

\begin{theorem}[Output Correctness]
\label{thm:output-correctness}
Let $Q_k$ be the output distribution after $k \geq k^*(\varepsilon)$ iterations. Then for any
event $A \subseteq \{0, \ldots, N-1\}$,
\[
  \bigl|\mathbb{P}_{Q_k}(A) - U(A)\bigr| \leq \varepsilon.
\]
In particular, after von Neumann extraction (Section~\ref{sec:output-extraction}), each output
bit $b_i$ satisfies
\[
  \mathbb{P}(b_i = c) \in \left[\tfrac{1}{2} - \varepsilon,\;
  \tfrac{1}{2} + \varepsilon\right]
  \quad \text{for } c \in \{0,1\}.
\]
\end{theorem}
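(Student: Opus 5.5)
The first claim follows directly from the definitions, so the real work is in the bit claim. Here I would use the von Neumann procedure exactly as Section~\ref{sec:output-extraction} describes it, with bits taken from the least-significant bit of independently measured state indices. That setup should give unbiasedness exactly, which is stronger than the $\pm\varepsilon$ window.

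\emph{Step 1 (events).} I read $k^*(\varepsilon)$ as $t_{\mathrm{mix}}(\varepsilon)$ for the chain $Q_k$, i.e.\ $\|Q_k-U\|_{\mathrm{TV}}\le\varepsilon$ for $k\ge k^*(\varepsilon)$; this is the quantity Theorems~\ref{thm:qubit-mixing} and~\ref{thm:qudit-mixing} bound. The preliminaries state the identity $\|P-U\|_{\mathrm{TV}}=\max_{A}|P(A)-U(A)|$. Applying it to $P=Q_k$ gives $|\mathbb{P}_{Q_k}(A)-U(A)|\le\varepsilon$ for every $A\subseteq\{0,\ldots,N-1\}$, which is the first assertion.

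\emph{Step 2 (i.i.d.\ raw bits).} Each shot is a fresh preparation of $|\psi_k\rangle$ followed by a computational-basis measurement, or a multinomial draw in the qudit simulation. So the measured indices $X_1,X_2,\ldots$ are i.i.d.\ with law $Q_k$. The subsequent \textsc{Shuffle} applies a permutation chosen independently of the values, so it preserves the i.i.d.\ property. Define the raw bit $a_t$ as the least-significant bit of $X_t$. Then the $a_t$ are i.i.d.\ Bernoulli$(p)$ with $p=Q_k(A_1)$, where $A_1=\{x: x \text{ odd}\}$. By Step~1, $|p-\tfrac12|\le\varepsilon$, since $U(A_1)=\tfrac12$ (for $N$ even; for odd $N$ the same bound holds up to the $O(1/N)$ offset of $U(A_1)$).

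\emph{Step 3 (extraction).} Group the raw bits into disjoint pairs $(a_{2s-1},a_{2s})$. These pairs are i.i.d., and $\mathbb{P}(01)=\mathbb{P}(10)=p(1-p)$. Hence, conditioned on a pair being accepted, its output bit is uniform. Moreover, the value of an accepted pair is independent of whether the other pairs are accepted.

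The main obstacle is that the $i$-th \emph{output} bit $b_i$ comes from a random pair index $S_i$, namely the $i$-th accepted pair, so one must rule out any bias introduced by this selection. I would handle it by decomposing over the event $\{S_i=s\}$. This event is determined by the acceptance indicators of pairs $1,\ldots,s$, and those indicators are independent of the value of pair $s$ given that pair $s$ is accepted. Therefore
\[
  \mathbb{P}(b_i=c)=\sum_s \mathbb{P}(S_i=s)\,\mathbb{P}(\text{value}_s=c\mid \text{pair } s \text{ accepted})=\tfrac12 .
\]
The acceptance probability $2p(1-p)$ is positive as long as $p\notin\{0,1\}$, which Step~2 guarantees when $\varepsilon<\tfrac12$. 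So $S_i<\infty$ almost surely, and the sum above is a genuine probability decomposition.

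This yields $\mathbb{P}(b_i=c)=\tfrac12\in[\tfrac12-\varepsilon,\tfrac12+\varepsilon]$, proving the second assertion as stated. The hypothesis $k\ge k^*(\varepsilon)$ is needed here only for the nondegeneracy $p\in(0,1)$.
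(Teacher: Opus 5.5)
Your argument is correct and rests on the same ingredients as the paper's proof: total variation as a maximum over events, the bit event having uniform mass $\tfrac12$, and the $p(1-p)$ symmetry of von Neumann pairs. The difference is that the paper puts the $\varepsilon$-window on bit $i$ of the raw index and then merely asserts that extraction removes the residual bias, whereas you read $b_i$ as the $i$-th extracted bit, prove it is exactly unbiased, and use the total-variation bound only to get $p\in(0,1)$. Your version is also more careful. The paper claims that exactly $N/2$ indices have bit $i$ equal to $0$, which fails for non-power-of-two $N$ (for $N=100$ only $36$ indices have bit $6$ set). It also never justifies that the raw bits are i.i.d.\ or that selecting the $i$-th \emph{accepted} pair introduces no bias. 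Your least-significant-bit restriction, shot-independence argument, and decomposition over $\{S_i=s\}$ handle these points.
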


\begin{proof}
The first statement is the definition of total variation distance. For the marginal bit
distributions: under $U$, exactly $N/2$ states have bit $i$ equal to $0$ and $N/2$ have it
equal to $1$, so $U(b_i = c) = 1/2$. The bound
$\left|\mathbb{P}_{Q_k}(b_i = c) - \tfrac{1}{2}\right| \leq \lVert Q_k - U \rVert_{\mathrm{TV}} \leq \varepsilon$
follows directly. The von Neumann extractor then removes this residual $\varepsilon$-bias
unconditionally, since for any Bernoulli source with parameter $p$, the accepted pairs $(01)$
and $(10)$ occur with equal probability $p(1-p)$ regardless of $p$
(Section~\ref{sec:output-extraction}).
\end{proof}

\begin{proposition}[Independence of Successive Outputs]
\label{prop:independence}
Let $v_1, v_2, \ldots, v_t$ be successive outputs produced by the von Neumann extraction
procedure of Section~\ref{sec:output-extraction}, drawn from the continuously evolving state
$|\psi_k\rangle$ without re-initialisation between outputs. For any two indices $i < j$,
\[
  \bigl|\Pr(v_i = a,\, v_j = b)
        - \Pr(v_i = a)\,\Pr(v_j = b)\bigr| \leq 2\varepsilon,
\]
where $\varepsilon = \lVert Q_k - U \rVert_{\mathrm{TV}}$ at the time of sampling. Once the
chain has mixed to precision $\varepsilon$, successive outputs are $2\varepsilon$-close to
independent. Re-initialisation to $|0\rangle^{\otimes n}$ between outputs is not required and is
actively harmful, as it reintroduces the cold-start bias of the ordered initial state
(Section~\ref{sec:qudit-grover}).
\end{proposition}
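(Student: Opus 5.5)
The plan is to compare the joint law of $(v_i,v_j)$ with the product of its marginals by inserting the uniform distribution as an intermediate reference, and to charge each substitution to a total-variation term controlled by Theorem~\ref{thm:output-correctness}.

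\textbf{Step 1: conditioning on the first output.} Model the sampling as a two-stage process. The output $v_i$ is a function of a measurement record drawn from $Q_k$, where $k$ is the iteration count at the time of the $i$-th sample. After that measurement the register continues to evolve under further mixing layers until $v_j$ is extracted. Write
\[
\Pr(v_i=a,\,v_j=b)=\Pr(v_i=a)\,\Pr(v_j=b\mid v_i=a).
\]
With this factorisation the claim reduces to bounding $\bigl|\Pr(v_j=b\mid v_i=a)-\Pr(v_j=b)\bigr|$ by $2\varepsilon$. The outer factor $\Pr(v_i=a)\le 1$ can then be dropped.

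\textbf{Step 2: two applications of the TV bound.} Let $Q^{(a)}$ denote the distribution of the measurement record feeding $v_j$, conditioned on $v_i=a$. Let $Q$ denote the corresponding unconditional distribution. By the hypothesis that the chain has mixed to precision $\varepsilon$ at every sampling time, and by the Markov-type contraction $\|\Phi(\mu)-\Phi(\nu)\|_{\TV}\le\|\mu-\nu\|_{\TV}$ under any channel $\Phi$ (which covers the further mixing layers and the deterministic von Neumann extraction map), both $Q^{(a)}$ and $Q$ are within $\varepsilon$ of $U$ in total variation. The triangle inequality then gives
\[
\bigl|\Pr(v_j=b\mid v_i=a)-\Pr(v_j=b)\bigr|\le \|Q^{(a)}-U\|_{\TV}+\|U-Q\|_{\TV}\le 2\varepsilon.
\]
Here each probability is an event probability, bounded via the first statement of Theorem~\ref{thm:output-correctness}. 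Multiplying through by $\Pr(v_i=a)$ yields the stated inequality. The remark that re-initialisation is harmful would follow by observing that restarting from $|0\rangle^{\otimes n}$ resets the TV distance to $C$, at which point Theorem~\ref{thm:qudit-mixing} no longer gives an $\varepsilon$ bound.

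\textbf{Main obstacle.} The hard step is justifying that the \emph{conditional} post-measurement state is still $\varepsilon$-close to uniform. A projective measurement collapses $|\psi_k\rangle$ onto a basis state, which is exactly the cold-start situation. I would first try to argue that enough mixing layers (at least $k^*$, as in Theorem~\ref{thm:qubit-mixing}) separate successive samples, so that the mixing bound applies afresh from the collapsed state, uniformly over initial basis states. If that uniformity over starting states fails, the fallback is to assume each sample comes from an independent run of the same circuit (the multinomial draw in Algorithm~\ref{alg:qudit-qrng-2}). In that case the joint law is an exact product and the bound holds trivially.
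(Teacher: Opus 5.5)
\textbf{There is a genuine gap in your Step~2, at the point you flag as the main obstacle.} The bound $\|Q^{(a)}-U\|_{\mathrm{TV}}\le\varepsilon$ does not follow from the hypothesis. The quantity $\varepsilon=\|Q_k-U\|_{\mathrm{TV}}$ controls only the \emph{unconditional} law of a sample started from $|0\rangle^{\otimes n}$.

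Your contraction argument is sound for the unconditional law of the later sample. Measuring in the computational basis and then applying the intervening unitary $W$ induces the kernel $K(x,y)=|\langle y|W|x\rangle|^2$. This kernel is doubly stochastic and so fixes $U$, which puts that law within $\varepsilon$ of $U$ (the reference for $v_j$ being the push-forward of $U$, i.e.\ a fair bit).

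Applied to the conditional law, contraction only gives $\|Q^{(a)}-U\|_{\mathrm{TV}}\le\|\Pr(x\in\cdot\mid v_i=a)-U\|_{\mathrm{TV}}$. Conditioning on $v_i=a$ fixes at least one bit of the index to which the register collapsed when $v_i$ was read out, so this right-hand side is at least $1/2$, not $\varepsilon$.

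To close the step you would need a worst-case bound $\max_x\bigl\|K(x,\cdot)-U\bigr\|_{\mathrm{TV}}\le\varepsilon'$ over all basis-state starts. The paper never proves this: Theorems~\ref{thm:qubit-mixing} and~\ref{thm:qudit-mixing} start from the ordered state. Even with such a bound, your conclusion would be in terms of $\varepsilon'$, not the $\varepsilon$ of the statement. Separating samples by ``at least $k^*$'' layers does not help within the paper's own model either. The overlap $\sin^2((2k+1)\theta)$ is periodic in $k$, so the distance grows again after overshoot (compare the non-monotone profile in Table~\ref{tab:qubit_layers}). The gap would have to be tuned, not merely large.

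\textbf{Your fallback closes the argument only by changing the model.} With independent runs of the circuit (the multinomial draw of Algorithm~\ref{alg:qudit-qrng-2}, or repeated hardware shots), the left-hand side is exactly $0$. This is in effect all the paper's own proof establishes. It inserts $U(A)U(B)$ via the triangle inequality ``for independent draws'' and bounds each term by $\varepsilon$. Under independence the left side vanishes anyway. Without independence, the first term $|\Pr(v_i\in A,v_j\in B)-U(A)U(B)|\le\varepsilon$ does not follow from bounds on the marginals.

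So neither the paper nor your main line handles the post-measurement collapse you correctly identified. The inequality is only established in a re-initialised, independent-runs model, which contradicts the ``without re-initialisation'' hypothesis.

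\textbf{Your justification of the final sentence also fails.} Restarting from $|0\rangle^{\otimes n}$ and applying the same $k$ layers reproduces $Q_k$, at distance $\varepsilon$ rather than $C$. Measurement already collapses the register to a basis state, so omitting re-initialisation does not avoid a cold start. In the independent-runs model, re-initialisation is precisely what makes the outputs exactly independent.
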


\begin{proof}
Since $Q_k$ is $\varepsilon$-close to $U$ in total variation, for any events
$A, B \subseteq \{0,\ldots,N-1\}$:
\[
  |\mathbb{P}_{Q_k}(v_i \in A)  - U(A)| \leq \varepsilon, \qquad
  |\mathbb{P}_{Q_k}(v_j \in B)  - U(B)| \leq \varepsilon.
\]
For independent draws, $A$ and $B$, the triangle inequality yields
\begin{align*}
  &\bigl|\Pr(v_i \in A,\, v_j \in B) - \Pr(v_i \in A)\,\Pr(v_j \in B)\bigr| \\
  &\quad\leq
    \bigl|\Pr(v_i \in A,\, v_j \in B) - U(A)\,U(B)\bigr|
    + \bigl|U(A)\,U(B) - \Pr(v_i \in A)\,\Pr(v_j \in B)\bigr| \\
  &\quad\leq \varepsilon + \varepsilon = 2\varepsilon. \qedhere
\end{align*}
\end{proof}

\section{Computational Results}
\label{sec:results}

We evaluate all three algorithms on concrete instances. The classical Gilbert--Shannon--Reeds
(GSR) riffle shuffle is simulated on a 25-card deck over 500 independent trials. The qubit
variant ($n=3$, $N=8$) was executed on IBM \textit{ibm\_marrakesh}~\cite{ibm_quantum_computers}
across layers $k=0$ to $k=25$ (8{,}192 shots per circuit). The qudit variant ($d=10$, $m=2$,
$N=100$) was executed on the same backend under identical conditions. All quantum circuits were
compiled at Qiskit optimisation level~3~\cite{ibm_transpiler_optimization}.

\subsection{Classical GSR Riffle Shuffle (25 Cards)}
\label{sec:results_classical}

Table~\ref{tab:classical} records the dominant arrangement (first five positions), peak
probability $p_{\max}$, and total variation distance $d_k$ at each layer $k$, initialised from
the sorted deck $[1, 2, \ldots, 25]$. Ordered structure persists after a single shuffle
($d_1 = 0.4918$), and the TV distance decays geometrically, consistent with the dominant
eigenvalue $(n{-}1)/n = 0.96$. Effective mixing ($d_k < 0.01$) is first achieved at $k=7$ with
$d_7 = 0.0027$, confirming the Aldous--Diaconis bound~\cite{aldous}.

\begin{table}[ht]
\centering
\caption{Classical GSR riffle shuffle on a 25-card deck. Dominant arrangement (first 5
positions), peak probability $p_{\max}$, and TV distance $d_k$ at each layer $k$, estimated
from 500 independent trials.}
\label{tab:classical}
\begin{tabular}{p{1.2cm}p{5cm}cc}
\toprule
Layer $k$ & Dominant (first 5 cards) & $p_{\max}$ & $d_k$ \\
\midrule
0  & $[1,2,3,4,5]\ldots$         & 1.0000 & 1.0000 \\
1  & $[1,2,3,4,5]\ldots$         & 0.0300 & 0.4918 \\
2  & $[11,12,13,14,15]\ldots$    & 0.0040 & 0.2403 \\
3  & $[19,4,15,12,1]\ldots$      & 0.0020 & 0.1272 \\
4  & $[5,13,10,22,11]\ldots$     & 0.0020 & 0.0618 \\
5  & $[5,18,7,23,1]\ldots$       & 0.0020 & 0.0348 \\
6  & $[22,19,3,14,21]\ldots$     & 0.0020 & 0.0150 \\
7  & $[14,13,20,19,22]\ldots$    & 0.0020 & \textbf{0.0027} \\
8  & $[15,7,9,11,20]\ldots$      & 0.0020 & 0.0087 \\
9  & $[25,21,9,22,18]\ldots$     & 0.0020 & 0.0038 \\
10 & $[24,18,10,4,15]\ldots$     & 0.0020 & 0.0095 \\
\bottomrule
\multicolumn{4}{l}{\small $d_k < 0.01$ first achieved at $k = 7$.}
\end{tabular}
\end{table}

\subsection{Qubit Algorithm ($n=3$, $N=8$)}
\label{sec:results_qubit}

Table~\ref{tab:qubit_layers} gives the dominant basis state, peak probability $p_{\max}$, and
TV distance $d_k$ at each layer $k = 0, 1, \ldots, 25$, measured on IBM \textit{ibm\_marrakesh}.
The transpiled circuit depth grows from 1 at $k=0$ to 5{,}207 at $k=25$, with gate count growing
from 3 to 7{,}750. The TV distance does not decrease monotonically; it oscillates as the circuit
rotates amplitude past the uniform target, consistent with the Grover rotation predicted by
Theorem~\ref{thm:qubit-mixing}. The minimum TV distance achieved is $d_{16} = 0.0155$ at
$k^* = 16$.

\begin{table}[ht]
\centering
\caption{Qubit algorithm ($n=3$, $N=8$): dominant basis state, peak probability $p_{\max}$, and
TV distance $d_k$ at each mixing layer $k$. IBM \textit{ibm\_marrakesh}, 8{,}192 shots per
circuit, initial state $|000\rangle$. Minimum TV distance $d_{16} = 0.0155$ is marked in bold.}
\label{tab:qubit_layers}
\begin{tabular}{cccc}
\toprule
Layer $k$ & Dominant state & $p_{\max}$ & $d_k$ \\
\midrule
 0 & $|000\rangle$ & 0.9960 & 0.8710 \\
 1 & $|010\rangle$ & 0.2435 & 0.3090 \\
 2 & $|011\rangle$ & 0.2025 & 0.0920 \\
 3 & $|000\rangle$ & 0.2675 & 0.1819 \\
 4 & $|000\rangle$ & 0.1805 & 0.1011 \\
 5 & $|101\rangle$ & 0.1779 & 0.0908 \\
 6 & $|000\rangle$ & 0.1519 & 0.0708 \\
 7 & $|001\rangle$ & 0.1544 & 0.0598 \\
 8 & $|000\rangle$ & 0.1604 & 0.0500 \\
 9 & $|001\rangle$ & 0.1410 & 0.0312 \\
10 & $|011\rangle$ & 0.1406 & 0.0292 \\
11 & $|000\rangle$ & 0.1351 & 0.0273 \\
12 & $|010\rangle$ & 0.1366 & 0.0219 \\
13 & $|001\rangle$ & 0.1326 & 0.0214 \\
14 & $|001\rangle$ & 0.1415 & 0.0356 \\
15 & $|010\rangle$ & 0.1368 & 0.0264 \\
16 & $|001\rangle$ & 0.1311 & \textbf{0.0155} \\
17 & $|001\rangle$ & 0.1359 & 0.0272 \\
18 & $|011\rangle$ & 0.1324 & 0.0227 \\
19 & $|010\rangle$ & 0.1378 & 0.0209 \\
20 & $|011\rangle$ & 0.1354 & 0.0248 \\
21 & $|010\rangle$ & 0.1354 & 0.0227 \\
22 & $|011\rangle$ & 0.1395 & 0.0292 \\
23 & $|001\rangle$ & 0.1395 & 0.0331 \\
24 & $|101\rangle$ & 0.1362 & 0.0223 \\
25 & $|000\rangle$ & 0.1344 & 0.0225 \\
\bottomrule
\multicolumn{4}{l}{\small Best: $k^* = 16$,\ $d_{16} = 0.0155$.}
\end{tabular}
\end{table}

\subsection{Qudit Algorithm ($d=10$, $m=2$, $N=100$)}
\label{sec:results_qudit}

Table~\ref{tab:qudit_layers} gives the dominant digit, peak probability $p_{\max}$, accepted
symbol count, rejection rate, and TV distance $d_k$ at each layer $k = 0, 1, \ldots, 25$,
measured on IBM \textit{ibm\_marrakesh} ($d=10$, $m=2$, 8{,}192 shots per circuit). Each
base-10 digit is encoded in $\lceil\log_2 10\rceil = 4$ physical qubits, giving
$N_{\mathrm{phys}}=8$ data qubits plus one ancilla. Shots whose sub-register encodes a digit
$\geq 10$ are discarded; the observed rejection rate stabilises around $61.8\%$, consistent with
the theoretical prediction $1 - (10/16)^2 = 60.9\%$. The TV distance drops from $d_0 = 0.8859$
to $d_1 = 0.0372$ in a single layer and remains below $0.05$ for all $k \geq 1$, with a minimum
of $d_{20} = 0.0101$. Von~Neumann extraction is not applied; the 4-bit encoding of base-10
digits violates the i.i.d.\ assumption of the pair test, and a prior validation run confirmed
that extraction worsens the TV distance sixfold.

\begin{table}[ht]
\centering
\caption{Qudit algorithm ($d=10$, $m=2$, $N=100$): dominant digit, peak probability $p_{\max}$,
and TV distance $d_k$ at each mixing layer $k$. IBM \textit{ibm\_marrakesh},
$|0\cdots0\rangle$ initial state, 8{,}192 shots per circuit. Minimum TV distance
$d_{20} = 0.0101$ is in bold.}
\label{tab:qudit_layers}
\begin{tabular}{cccc}
\toprule
Layer $k$ & Dominant digit & $p_{\max}$ & $d_k$ \\
\midrule
 0 & 0 & 0.9859 & 0.8859 \\
 1 & 0 & 0.1168 & 0.0372 \\
 2 & 2 & 0.1106 & 0.0247 \\
 3 & 0 & 0.1131 & 0.0299 \\
 4 & 8 & 0.1125 & 0.0284 \\
 5 & 2 & 0.1108 & 0.0326 \\
 6 & 0 & 0.1090 & 0.0218 \\
 7 & 6 & 0.1099 & 0.0328 \\
 8 & 3 & 0.1086 & 0.0302 \\
 9 & 9 & 0.1182 & 0.0350 \\
10 & 2 & 0.1074 & 0.0192 \\
11 & 8 & 0.1224 & 0.0454 \\
12 & 1 & 0.1094 & 0.0299 \\
13 & 9 & 0.1092 & 0.0193 \\
14 & 8 & 0.1079 & 0.0193 \\
15 & 8 & 0.1188 & 0.0479 \\
16 & 0 & 0.1165 & 0.0332 \\
17 & 8 & 0.1225 & 0.0375 \\
18 & 3 & 0.1106 & 0.0239 \\
19 & 9 & 0.1129 & 0.0245 \\
20 & 3 & 0.1059 & \textbf{0.0101} \\
21 & 8 & 0.1213 & 0.0447 \\
22 & 9 & 0.1099 & 0.0233 \\
23 & 9 & 0.1204 & 0.0372 \\
24 & 8 & 0.1146 & 0.0402 \\
25 & 3 & 0.1136 & 0.0326 \\
\bottomrule
\multicolumn{4}{l}{\small Best: $k^* = 20$,\ $d_{20} = 0.0101$.}
\end{tabular}
\end{table}

\section{Conclusion}
\label{sec:conclusion}

We have presented and experimentally validated a quantum algorithm for random number generation
that achieves a provable speedup over the classical top-to-random and GSR riffle shuffle
families. The algorithm employs the Quantum Fourier Transform (QFT), controlled-phase rotations
encoding the Markov eigenvalue spectrum, and the Grover diffusion operator. The resulting mixing
circuit drives an initially ordered state to the uniform distribution in $O(\sqrt{n \log n})$
iterations for qubits and $O(\sqrt{\log_d N})$ iterations for qudits, compared to the classical
$O(n \log n)$ mixing time.

\paragraph{Theoretical contributions.}
The core mathematical insight is that the Grover diffusion operator performs the quantum analogue
of the Aldous--Diaconis strong uniform stopping time: instead of waiting for the bottom card to
rise through $n \log n$ individual shuffle steps, the diffusion operator reflects the amplitude
distribution about its mean at every iteration, compressing the classical cutoff from
$k_0 = n \log n$ down to $k_0 = \pi \sqrt{n}/8$ for qubits. The speedup ratio
$O(\sqrt{N \ln N})$ grows with system size, making the quantum advantage increasingly pronounced
as $N$ increases.

The qudit generalization extends this further. By replacing $n$ qubits with $m = \log_d N$
qudits of local dimension $d$, the same $N$-state space is encoded in fewer physical subsystems,
reducing the QFT circuit depth from $O(\log^2 N)$ to $O(\log_d^2 N)$ gates per layer. For
$d = 10$ and $N = 100$, the theoretical mixing time of
$O(\sqrt{\log_{10} 100}) = O(\sqrt{2})$ layers is within a factor of two of what is observed on
hardware ($k^* = 1$ for $d_k < 0.05$, $k^* = 20$ for $d_k < 0.015$).

\paragraph{Experimental contributions.}
We reported the first layer-by-layer sweep of a qudit ($d=10$, $m=2$) mixing circuit on IBM
hardware, comprising 26 circuits ($k = 0$ to $k = 25$) totalling 212{,}992 shots on
\textit{ibm\_marrakesh}. Three experimental findings are noteworthy: (i) the qudit achieves
$d_1 = 0.037$ in a single mixing layer, demonstrating a near-cutoff transition that has no
classical analogue at this circuit depth -- this constitutes hardware evidence of the $O(1)$
quantum mixing time for the chosen parameters. (ii) the qubit data exhibits Grover oscillation
-- a non-monotone TV profile with period determined by the rotation angle $\theta$ -- which is a
distinctive signature of unitary quantum dynamics absent from any classical Markov chain
satisfying detailed balance; this oscillatory profile cannot be reproduced by any classical
stochastic process and therefore provides experimental evidence of a genuinely quantum mixing
mechanism. (iii) the rejection-sampling approach for non-power-of-2 dimensions operates cleanly
at scale: the observed rejection rate ($61.8\%$ at $k=20$) matches the theoretical prediction
($60.9\%$) to within one percentage point, confirming the validity of the qudit encoding model
on noisy hardware.

\bibliographystyle{unsrt}
\bibliography{references}

@book{menezes,
  author    = {Alfred J. Menezes and Paul C. van Oorschot and Scott A. Vanstone},
  title     = {Handbook of Applied Cryptography},
  publisher = {CRC Press},
  year      = {1996}
}

@book{ferguson,
  author    = {Niels Ferguson and Bruce Schneier and Tadayoshi Kohno},
  title     = {Cryptography Engineering},
  publisher = {Wiley},
  year      = {2010}
}

@book{law,
  author    = {Averill M. Law},
  title     = {Simulation Modelling and Analysis},
  publisher = {McGraw-Hill},
  year      = {2015}
}

@book{deng,
  author    = {Lih-Yuan Deng and Narain Kumar and H. H. Shinglu and Chiang-Cheng Yang},
  title     = {Random Number Generators for Computer Simulation and Cybersecurity},
  publisher = {Springer},
  year      = {2025}
}

@book{RAND,
  author    = {{The RAND Corporation}},
  title     = {A Million Digits with 100{,}000 Normal Deviates},
  publisher = {American Book Publishers},
  year      = {2001}
}

@book{rychlik,
  author    = {Igor Rychlik and Jörg Ryd\'{e}n},
  title     = {Risk Analysis},
  publisher = {Springer},
  year      = {2006}
}

@article{aldous,
  author  = {David Aldous and Persi Diaconis},
  title   = {Shuffling cards and stopping times},
  journal = {American Mathematical Monthly},
  volume  = {93},
  pages   = {333--348},
  year    = {1986}
}

@article{mehrdad,
  author  = {Persi Diaconis and Mehrdad Shahshahani},
  title   = {Generating a random permutation with random transpositions},
  journal = {Zeitschrift f{\"u}r Wahrscheinlichkeitstheorie und Verwandte Gebiete},
  volume  = {57},
  pages   = {159--179},
  year    = {1981}
}

@inproceedings{grover,
  author    = {Lov K. Grover},
  title     = {A fast quantum mechanical algorithm for database search},
  booktitle = {Proceedings of the 28th Annual ACM Symposium on Theory of Computing},
  pages     = {212--219},
  year      = {1996}
}

@book{rieffel,
  author    = {Eleanor G. Rieffel and Wolfgang H. Polak},
  title     = {Quantum Computing: {A} Gentle Introduction},
  publisher = {MIT Press},
  address   = {Cambridge, MA},
  year      = {2011}
}

@book{nielsen,
  author    = {Michael A. Nielsen and Isaac L. Chuang},
  title     = {Quantum Computation and Quantum Information},
  publisher = {Cambridge University Press},
  address   = {Cambridge},
  year      = {2010}
}

@book{motwani,
  author    = {Rajeev Motwani and Prabhakar Raghavan},
  title     = {Randomized Algorithms},
  publisher = {Cambridge University Press},
  year      = {2013}
}

@article{bofinger,
  author  = {Eve Bofinger and V. J. Bofinger},
  title   = {On a periodic property of pseudo-random sequences},
  journal = {Journal of the ACM},
  volume  = {5},
  pages   = {261--265},
  year    = {1958}
}

@article{hull,
  author  = {T. E. Hull and A. R. Dobell},
  title   = {Random number generators},
  journal = {SIAM Review},
  volume  = {4},
  pages   = {230--254},
  year    = {1962}
}

@article{foster,
  author  = {F. G. Foster and A. Stuart},
  title   = {Distribution-free tests in time-series based on the breaking of records},
  journal = {Journal of the Royal Statistical Society, Series B},
  volume  = {16},
  pages   = {1--13},
  year    = {1954}
}

@article{bayer,
  author  = {Dave Bayer and Persi Diaconis},
  title   = {Trailing the dovetail shuffle to its lair},
  journal = {Annals of Applied Probability},
  pages   = {294--313},
  year    = {1992}
}

@article{diaconis,
  author  = {Persi Diaconis and Susan Holmes},
  title   = {A {B}ayesian peek into {F}eller volume {I}},
  journal = {Sankhy\={a}, Series A},
  pages   = {820--841},
  year    = {2002}
}

@book{silverman,
  author    = {Joseph H. Silverman},
  title     = {A Friendly Introduction to Number Theory},
  edition   = {4th},
  publisher = {Pearson},
  year      = {2012}
}

@article{matsumoto,
  author  = {Makoto Matsumoto and Takuji Nishimura},
  title   = {{Mersenne Twister}: a 623-dimensionally equidistributed uniform pseudo-random number generator},
  journal = {ACM Transactions on Modeling and Computer Simulation},
  volume  = {8},
  pages   = {3--30},
  year    = {1998}
}

@article{melissa,
  author  = {Melissa E. O'Neill},
  title   = {{PCG}: A family of simple fast space-efficient statistically good algorithms for random number generation},
  journal = {ACM Transactions on Mathematical Software},
  pages   = {1--46},
  year    = {2014}
}

@article{eckhardt,
  author  = {Roger Eckhardt},
  title   = {Stan {U}lam, {J}ohn von {N}eumann},
  journal = {Los Alamos Science},
  volume  = {10},
  pages   = {131},
  year    = {1987}
}

@article{blackman,
  author  = {David Blackman and Sebastiano Vigna},
  title   = {Scrambled linear pseudorandom number generators},
  journal = {ACM Transactions on Mathematical Software},
  volume  = {47},
  pages   = {1--32},
  year    = {2021}
}

@article{testu01,
  author  = {Pierre L'Ecuyer and Richard Simard},
  title   = {{TestU01}: A {C} library for empirical testing of random number generators},
  journal = {ACM Transactions on Mathematical Software},
  volume  = {33},
  pages   = {1--40},
  year    = {2007}
}

@article{lprng,
  author  = {Kenji Okada and Kenji Endo and Kenji Yasuoka and Satoshi Kurabayashi},
  title   = {Learned pseudo-random number generator: {WGAN-GP} for generating statistically robust random numbers},
  journal = {PLoS ONE},
  volume  = {18},
  pages   = {e0287025},
  year    = {2023}
}

@article{eprng,
  author  = {Cheng Fei and Xinyu Zhang and Deng Wang and Haoze Hu and Rongxin Huang and Zhiyao Wang},
  title   = {{EPRNG}: Effective pseudo-random number generator on the internet of vehicles using deep convolution generative adversarial network},
  journal = {Information},
  volume  = {16},
  pages   = {21},
  year    = {2025}
}

@techreport{nist22,
  author      = {Andrew Rukhin and Juan Soto and James Nechvatal and Miles Smid and Elaine Barker},
  title       = {A statistical test suite for random and pseudorandom number generators for cryptographic applications},
  institution = {National Institute of Standards and Technology},
  number      = {Special Publication 800-22},
  year        = {2001}
}

@techreport{nist90,
  author      = {Meltem S{\"o}nmez Turan and Elaine Barker and John Kelsey and Kerry McKay and Mary Baish and Mike Boyle},
  title       = {Recommendation for the entropy sources used for random bit generation},
  institution = {National Institute of Standards and Technology},
  number      = {Special Publication 800-90B},
  year        = {2016}
}

@article{martinlof,
  author  = {Per Martin-L{\"o}f},
  title   = {The definition of random sequence},
  journal = {Information and Control},
  volume  = {9},
  pages   = {602--619},
  year    = {1966}
}

@article{church,
  author  = {Alonzo Church},
  title   = {On the concept of a random sequence},
  journal = {Bulletin of the American Mathematical Society},
  volume  = {40},
  pages   = {254--260},
  year    = {1940}
}

@book{schnorr,
  author    = {Claus-Peter Schnorr},
  title     = {Zuf{\"a}lligkeit und Wahrscheinlichkeit: {E}ine algorithmische {B}egr{\"u}ndung der {W}ahrscheinlichkeitstheorie},
  publisher = {Springer},
  address   = {Berlin},
  year      = {1971}
}

@article{jur02,
  author  = {Marcin M. Jacak and Piotr J\'{o}zwiak and Janusz Niemczuk and Janusz E. Jacak},
  title   = {Quantum generators of random numbers},
  journal = {Scientific Reports},
  volume  = {11},
  pages   = {16108},
  year    = {2021}
}

@article{wu,
  author  = {Xin Wu and Yanping Han and Ming Zhang and Yong Li and Shuai Cui},
  title   = {{GAN}-based pseudo random number generation optimized through genetic algorithms},
  journal = {Complex \& Intelligent Systems},
  volume  = {11},
  pages   = {31},
  year    = {2025}
}

@article{bauer,
  author  = {Walter F. Bauer},
  title   = {The {M}onte {C}arlo method},
  journal = {Journal of the Society for Industrial and Applied Mathematics},
  volume  = {6},
  pages   = {438--451},
  year    = {1958}
}

@article{ossola,
  author  = {Giovanni Ossola and Alan D. Sokal},
  title   = {Systematic errors due to linear congruential random-number generators with the {S}wendsen--{W}ang algorithm: {A} warning},
  journal = {Physical Review E},
  volume  = {70},
  pages   = {027701},
  year    = {2004}
}

@inproceedings{lehmer,
  author    = {Derrick H. Lehmer},
  title     = {Mathematical methods in large-scale computing units},
  booktitle = {Proceedings of the 2nd Symposium on Large-Scale Digital Calculating Machinery},
  pages     = {141--146},
  year      = {1949}
}

@article{moshman,
  author  = {Jack Moshman},
  title   = {The generation of pseudo-random numbers on a decimal calculator},
  journal = {Journal of the ACM},
  volume  = {1},
  pages   = {88--91},
  year    = {1954}
}

@misc{practrand,
  author       = {Chris Doty-Humphrey},
  title        = {Practically Random ({PractRand})},
  year         = {2018},
  howpublished = {Available: \url{http://pracrand.sourceforge.net/}}
}

@incollection{vonNeumann,
  author    = {John von Neumann},
  title     = {Various techniques used in connection with random digits},
  booktitle = {Applied Mathematics Series},
  volume    = {12},
  publisher = {National Bureau of Standards},
  pages     = {36--38},
  year      = {1951}
}

@misc{ibm_transpiler_optimization,
  author       = {{IBM Quantum}},
  title        = {Set Transpiler Optimization Level},
  howpublished = {[Online]. Available: \url{https://quantum.cloud.ibm.com/docs/en/guides/set-optimization}},
  note         = {Accessed: Apr.\ 30, 2026}
}

@misc{ibm_quantum_computers,
  author       = {{IBM Quantum}},
  title        = {{IBM} Quantum Computers},
  howpublished = {[Online]. Available: \url{https://quantum.cloud.ibm.com/computers}},
  note         = {Accessed: Apr.\ 30, 2026}
}

\end{document}